\newcommand{\il}[1]{\todo[inline, color= yellow]{#1}}
\newcommand{\ma}[1]{\todo[color= SpringGreen]{\tiny{#1}}}
\newcommand{\Ma}[1]{{\textcolor{magenta}{#1}}}
\newtheorem{observation}{Observation}
\newtheorem{newclaim}[theorem]{Claim}
\crefname{observation}{observation}{observations}
\crefname{newclaim}{claim}{claims}
\Crefname{observation}{Observation}{Observations}
\Crefname{newclaim}{Claim}{Claims}
\newcommand\cdotfill{%
    \leavevmode\cleaders\hb@xt@.44em{\hss$\cdot$\hss}\hfill\kern\z@
}
\newcommand{\bigoh}{\mathcal{O}}
\newcommand{\shortversion}[1]{}
\newcommand{\hide}[1]{}
\newcommand{\FVSbIN}{{\sf DFVS-bIN}\xspace}
\newcommand{\wFVSbIN}{{\sf FVS-bIN}\xspace}
\newcommand{\DFVSbIN}{{\sf DFVS-bIN}\xspace}
\newcommand{\sfvst}{{\sf S-FVST}\xspace}
\newcommand{\FVS}{{\rm FVS}\xspace}
\newcommand{\SFVS}{{\rm SFVS}\xspace}
\algrenewcommand\alglinenumber[1]{\scriptsize #1:}
\newenvironment{algo}
 {\par\addvspace{\topsep}
  \centering
  \begin{minipage}{\linewidth}
  \hrule\kern2pt}
 {\par\kern2pt\hrule
  \end{minipage}
  \par\addvspace{\topsep}}
\begin{document}
\title{Quick-Sort Style Approximation Algorithms for Generalizations of Feedback Vertex Set in Tournaments}
\titlerunning{Quick-Sort Style Approximation Algorithms for Generalizations of FVST}

\author{Sushmita Gupta\inst{1} \and Sounak Modak\inst{1}\textsuperscript{(\Letter)}
 \and Saket Saurabh\inst{1,2} \and Sanjay Seetharaman\inst{1}}

 \institute{The Institute of Mathematical Sciences, Chennai, India 
 \email{\{sushmitagupta,sounakm,saket,sanjays\}@imsc.res.in}
 \and
University of Bergen, Bergen, Norway
}

\authorrunning{S. Gupta et al.}
\maketitle

\begin{abstract}
    A feedback vertex set (FVS) in a digraph is a subset of vertices whose removal makes the digraph acyclic. In other words, it hits all cycles in the digraph. Lokshtanov et al. [TALG '21] gave a factor 2 randomized approximation algorithm for finding a minimum weight FVS in tournaments. We generalize the result by presenting a factor $2\alpha$ randomized approximation algorithm for finding a minimum weight FVS in digraphs of independence number $\alpha$; a generalization of tournaments which are digraphs with independence number $1$. 
    Using the same framework, we present a factor $2$ randomized approximation algorithm for finding a minimum weight Subset FVS in tournaments: given a vertex subset $S$ in addition to the graph, find a subset of vertices that hits all cycles containing at least one vertex in $S$. Note that FVS in tournaments is a special case of Subset FVS in tournaments in which $S = V(T)$.
\end{abstract}
\nocite{lokshtanov20212}

\section{Introduction}
Quicksort is a randomized divide-and-conquer algorithm for sorting a list of numbers. 
In this we randomly pick a pivot, partition the rest of the list into two parts, and recursively solve the two parts. 
The choice of pivot determines the size of the subproblems and consequently the overall running time. This forms a central idea in the polynomial time factor 2 randomized approximation algorithm for finding a minimum weight feedback vertex set in tournaments, given by  Lokshtanov et al. \cite{lokshtanov20212}. 
In this paper one of our goals is to find further problems for  which this approach can be applied in designing approximation algorithms. Without further ado we formally define the problem studied by Lokshtanov et al. \cite{lokshtanov20212}, which we generalize in this article. 

A tournament $T=(V,E)$ is a digraph in which there is exactly one arc between each pair of vertices (an orientation of a clique). A feedback vertex set is a subset of vertices whose removal makes the digraph acyclic. In the \textsc{Feedback Vertex Set in Tournaments} (FVST) problem, we are given a tournament $T$ and a weight function $w: V(G) \rightarrow \mathbb{N}_{\ge 0}$. 
The goal is to find a minimum weight subset of vertices whose removal makes the digraph acyclic.
It is a folklore that a tournament is acyclic if and only if it has no triangles (directed cycles of length $3$). This together with the local ratio technique \cite{bar2004local} gives a simple factor $3$  approximation algorithm for FVST. 
Cai et al. \cite{cai2001approximation} gave the first improvement over this algorithm and designed a $2.5$-approximation algorithm based on the total dual integral system combined with the local ratio technique.
Years later, Mnich et al. \cite{mnich20167} gave a $\nicefrac73$ approximation algorithm using the iterative rounding technique. Finally in 2020, Lokshtanov et al. \cite{lokshtanov20212} gave a ``quicksort style" randomized $2$-approximation algorithm. It is optimal (i.e. there is no polynomial time approximation algorithm with a better factor) assuming the Unique Games Conjecture \cite{khot2008vertex}. 

In this paper we apply this methodology to two generalizations of FVST.
\begin{enumerate}[wide=0pt]
    \item \textbf{Beyond Tournaments.} The {\it independence number} of a digraph is the size of a largest independent set in it. In \cite{seymour15}, Fradkin and Seymour introduced the class of digraphs of bounded independence number as a generalization of tournaments (which are digraphs with independence number 1). Problems studied in such digraphs include $k$-\textsc{Edge Disjoint Paths} \cite{seymour15}, \textsc{Edge Odd Cycle Transversal}, and \textsc{Feedback Arc Set} \cite{lochet2020fault}. In the \textsc{Directed FVS in graphs with Bounded Independence Number} (\FVSbIN) problem, we are given a digraph $G$ with independence number $\alpha$, called {\it $\alpha$-bounded digraph}, and a weight function $w: V(G) \rightarrow \mathbb{N}_{\ge 0}$. We are interested in finding a minimum weight feedback vertex set in $G$.
    
    \item \textbf{Subset Version of FVST.} A feedback vertex set is equivalently defined as a subset of vertices that hits all cycles in a digraph. In the \textsc{Subset Feedback Vertex Set in Tournaments} (\sfvst) problem, we are given as subset of vertices $S$ (often called as \textit{terminal set}) as input, in addition to a tournament $T$ and a weight function $w: V(T) \rightarrow \mathbb{N}_{\ge 0}$. We are interested in finding a minimum weight subset of vertices that hits all cycles that contain a vertex in $S$. Note that FVST is a special case of \sfvst in which $S = V(T)$.
\end{enumerate}
We design non-trivial randomized approximation algorithms for both \sfvst and \FVSbIN, with the techniques used in \cite{lokshtanov20212} as a starting point. Note that neither problem is a special case of the other. In \FVSbIN, we are generalizing the tournament graph by allowing the independence number to be $\alpha$ and in \sfvst, we are generalizing the obstruction set to be cycles that contain a vertex from $S$. Thus, the algorithm for one does not seem to apply for the other in a straightforward manner.

\subsection{Our Results}
Our first result is a randomized $2\alpha$-approximation algorithm for \FVSbIN. Observe that $G$ has a cycle if and only if $G$ has a cycle of length at most $2\alpha+1$. Due to this fact, we can apply the local ratio technique and obtain a $(2\alpha+1)$-approximate feedback vertex set in polynomial time. Improving upon this, we present the following.

\begin{theorem}
    \label{thm:fvsbin}
    \FVSbIN admits a randomized $2\alpha$-approximation algorithm that runs in time $n^{\bigoh(\alpha^2)}$.
\end{theorem}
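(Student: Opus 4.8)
The plan is to lift the quick-sort style recursion of Lokshtanov et al.~\cite{lokshtanov20212} from tournaments to $\alpha$-bounded digraphs. The recursive algorithm $\mathsf{Alg}(G,w)$ returns $\emptyset$ when $G$ is acyclic; otherwise (after passing to a strongly connected component and deleting vertices lying on no cycle) it samples a \emph{pivot} and uses the arcs incident to it to split the remaining vertices into a part $L$ that must precede the pivot and a part $R$ that must follow it, recursively computing feedback vertex sets $X_L$ of $G[L]$ and $X_R$ of $G[R]$; it then computes, in the \emph{combine} step, a minimum-weight set $Y\subseteq(L\setminus X_L)\cup(R\setminus X_R)$ whose deletion makes $G-X_L-X_R$ acyclic, and returns $X_L\cup X_R\cup Y$. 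For tournaments ($\alpha=1$) the pivot is a single vertex $v$, the split $L=N^-(v)$, $R=N^+(v)$ is a partition of $V\setminus\{v\}$, and after the recursion the only back arcs in $G-X_L-X_R$ run from $R\setminus X_R$ to $L\setminus X_L$, so every surviving cycle is witnessed by such an arc and $Y$ is a minimum-weight vertex cover of the associated bipartite graph, computable exactly. For general $\alpha$ the pivot will have to be a set (see below), but the same picture holds: by the fact recalled before Theorem~\ref{thm:fvsbin} every surviving cycle of $G-X_L-X_R$ has length at most $2\alpha+1$, hence meets the pivot region in a bounded way, so the combine step reduces to a covering problem of bounded width that we can solve (exactly or near-exactly) in time $n^{\bigoh(\alpha^2)}$.

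Correctness is then immediate: $G[L\setminus X_L]$ and $G[R\setminus X_R]$ are acyclic because $X_L,X_R$ are feedback vertex sets of $G[L],G[R]$, so every cycle of $G-X_L-X_R$ passes through the pivot region, and $Y$ is chosen to destroy exactly these cycles; hence $X_L\cup X_R\cup Y$ is a feedback vertex set of $G$. For the approximation guarantee I would argue by induction on $|V(G)|$, the inductive hypothesis being $\mathbb{E}[w(\mathsf{Alg}(G'))]\le 2\alpha\cdot\mathrm{OPT}(G')$ for all smaller instances $G'$. Since $L$ and $R$ are vertex-disjoint sub-instances, $\mathrm{OPT}(G[L])+\mathrm{OPT}(G[R])\le\mathrm{OPT}(G)$; writing $\Delta:=\mathrm{OPT}(G)-\mathbb{E}\bigl[\mathrm{OPT}(G[L])+\mathrm{OPT}(G[R])\bigr]\ge 0$ for the expected drop in the optimum, the inductive step reduces to showing the combine step is not too expensive, namely $\mathbb{E}[w(Y)]\le 2\alpha\,\Delta$, after which the bound $\mathbb{E}[w(\mathsf{Alg}(G))]\le 2\alpha\,\mathrm{OPT}(G)$ follows by a one-line computation. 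To bound $\mathbb{E}[w(Y)]$ I would fix an optimal feedback vertex set $O$ of $G$ together with a topological order $\prec$ of $G-O$, and observe that conditioned on the pivot avoiding $O$ --- an event that the quick-sort style sampling makes likely enough --- each short surviving cycle must contain a vertex of $O$ (otherwise $\prec$ would contain a short directed cycle), so $O$ restricted to the pivot region covers the interface; bounding the optimal such cover by $O$'s contribution there, and then amortizing over the recursion --- where, as in the analysis of Quicksort, a random pivot ``consumes'' a constant fraction of the outstanding obstructions so that each vertex of $O$ is charged only $\bigoh(\alpha)$ times in expectation --- yields the factor. The extra factor $\alpha$ over the tournament case ($2$ versus $2\alpha$) is exactly the price of surviving cycles having up to $2\alpha+1$ vertices: pinning one pivot vertex leaves up to $2\alpha$ to pay for, and covering $\alpha$-bounded interface obstructions is itself only approximable up to a factor growing with $\alpha$.

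The step I expect to be the real obstacle is defining the pivot and the split for $\alpha$-bounded digraphs. In a tournament a single random vertex bipartitions the rest, but in an $\alpha$-bounded digraph a vertex can have many non-neighbours and digon-neighbours, so a single-vertex pivot does not induce a split, and the ``undecided'' part need not shrink --- which would wreck both the recursion depth and the approximation ratio. I expect to remedy this by sampling a pivot \emph{configuration}: a set of $\bigoh(\alpha)$ vertices such that $V$ minus that set genuinely splits into $L$ and $R$, together with up to $\bigoh(\alpha)$ associated short cycles that certify the split and drive the combine step, so that the configuration touches $\bigoh(\alpha^2)$ vertices in total and can be enumerated in time $n^{\bigoh(\alpha^2)}$ per recursion node; since there are only polynomially many recursion nodes this yields the stated running time. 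Getting this split to be well defined, to shrink the instance, and to lose only the claimed factor $2\alpha$ --- rather than, say, $2\alpha+1$ or worse --- is where the bulk of the technical work lies; the correctness and the outer induction are routine once the structural picture is pinned down.
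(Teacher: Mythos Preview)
Your proposal diverges from the paper's proof in a way that leaves a genuine gap. You anticipate that a single-vertex pivot cannot work in an $\alpha$-bounded digraph and therefore reach for a multi-vertex ``pivot configuration'' whose definition you postpone; in fact the paper's algorithm \emph{does} use a single random vertex $x$ as pivot. The point you are missing is the order of operations: the paper does not split first and combine later. Instead, \emph{before} any split it iteratively destroys every cycle through $x$ using the local-ratio technique. The structural lemma driving this (Lemma~\ref{obs:shortinducedcycles}) says that for the shortest cycle $C$ through $x$, either $|C|\le 2\alpha+1$ or $C$ contains an induced subcycle $C'$ of length at most $2\alpha$ avoiding $x$; in either case $C'\setminus\{x\}$ has at most $2\alpha$ vertices and must be hit by every feedback vertex set not containing $x$, so one local-ratio step on $C'\setminus\{x\}$ costs at most $2\alpha$ times what the optimum pays. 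Once no cycle through $x$ remains, the graph splits cleanly into $R_{G-P}(x)$ and $\overline{R}_{G-P}(x)$, and the two subproblems are solved recursively with no combine step at all.

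Your ``recurse first, then combine'' scheme is where the plan breaks. After deleting $X_L$ and $X_R$, the surviving cycles of $G-X_L-X_R$ are not constrained to pass through any small interface you control; all you know is that the digraph is still $\alpha$-bounded, so hitting its cycles is the original problem again. The sentence ``the combine step reduces to a covering problem of bounded width'' is not justified, and the charging argument ``each vertex of $O$ is charged only $\bigoh(\alpha)$ times'' has no mechanism behind it. You are also missing two ingredients the paper needs for the analysis to close: the case split on $|F_{opt}|$ (when $|F_{opt}|\ge 2n/(3\alpha)$ the algorithm peels off the $n/(6\alpha)$ lightest vertices and recurses, which is what makes ``pivot avoids $F_{opt}$'' likely in the complementary case), and the \textsf{HL-degree ordering} of $G-F_{opt}$ (Lemma~\ref{clm:hiinhioutver}), which guarantees that a good pivot has in- and out-degree $\Omega(n/\alpha)$ and hence that both reachability sides shrink by a $(1-\Theta(1/\alpha))$ factor. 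Without these, neither the $2\alpha$ factor nor the $n^{\bigoh(\alpha^2)}$ running time follows.
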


Observe that for $\alpha =1$, we get a $2$-approximation, which is the case of tournaments. 
Note that the running time is a polynomial for fixed $\alpha$. A natural question that follows from our result is whether there exists an approximation algorithm that runs in time $f(\alpha)n^{\bigoh(1)}$ for some computable function $f$ (i.e., for a fixed $\alpha$, the running time is a polynomial in $n$ whose degree does not depend on $\alpha$).

Next, we present a randomized $2$-approximation algorithm for the \sfvst problem. We observe that for any vertex $s \in S$, $T$ has a cycle containing $s$ if and only if $T$ has a triangle containing $s$. Due to this fact, we can apply the local ratio technique and obtain a $3$-approximate subset feedback vertex in polynomial time. Improving upon this, we present the following. 

\begin{theorem}
    \label{thm:sfvst}
    \sfvst admits a randomized $2$-approximation algorithm that runs in time $n^{\bigoh(1)}$.
\end{theorem}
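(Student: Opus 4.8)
The plan is to run a quicksort-style recursion in the spirit of Lokshtanov et al.~\cite{lokshtanov20212}, built around the following structural characterization, which plays here the role that ``$T-X$ is acyclic'' plays for FVST. \emph{Claim: $X\subseteq V(T)$ is a solution of \sfvst iff $T-X$ admits a vertex ordering $\pi$ such that, for every arc directed from a later vertex to an earlier one, none of the vertices lying weakly between its two endpoints in $\pi$ belongs to $S$.} For the ``only if'' direction I would argue that the terminals of $T-X$ induce a triangle-free, hence acyclic, subtournament with a topological order $s_1,\dots,s_k$; the absence of a triangle through any $s_i$ forces $\{i:u\to s_i\}$ to be a suffix of $\{1,\dots,k\}$ for every non-terminal $u$ (else $u\to s_i\to s_j\to u$ is a triangle), so each non-terminal falls into a well-defined ``slot'' between two consecutive terminals, and the same condition forces every arc between two different slots to run from the lower slot to the higher one; listing $T-X$ by slots then leaves only backward arcs that stay inside a single, $S$-free slot. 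For the ``if'' direction, note that in any triangle of $T-X$ through a vertex of $S$, examining the three vertices' positions in $\pi$ shows that some backward arc spans (or is incident to) a vertex of $S$; hence a $\pi$ as in the claim admits no triangle through $S\setminus X$, and, by the stated equivalence between cycles and triangles through a fixed vertex, no cycle through $S\setminus X$ in $T-X$ either, which is precisely what it means for $X$ to be a solution.

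\smallskip\noindent
The algorithm is the recursion $\mathtt{Alg}(T,S,w)$: if $T$ has no triangle through $S$, return $\emptyset$; otherwise pick a pivot $v\in S$ lying in some triangle (uniformly at random, or with a weight-dependent distribution tuned in the analysis), and exploit the observation that every solution either contains $v$ or contains a vertex cover of the bipartite ``cross-arc'' graph $H_v$ whose parts are $N^+(v)$ and $N^-(v)$ and whose edges are the arcs of $T$ from the first part to the second (each such arc, with $v$, forms a triangle through $v$). The algorithm branches: either it adds $v$ and returns $\{v\}\cup\mathtt{Alg}(T-v,\,S\setminus\{v\},w)$, or it computes $X_L=\mathtt{Alg}(T[N^-(v)],\,S\cap N^-(v),w)$ and $X_R=\mathtt{Alg}(T[N^+(v)],\,S\cap N^+(v),w)$, a minimum-weight vertex cover $C$ of the residual cross-arc graph on $(N^+(v)\setminus X_R,\ N^-(v)\setminus X_L)$, and returns $X_L\cup X_R\cup C$; which branch is taken is randomized, roughly by comparing $w(v)$ against the cost of covering $H_v$, so that ``expensive'' pivots get absorbed into the solution. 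Minimum-weight vertex cover in a bipartite graph is polynomial-time solvable (LP integrality, or a single min-cut computation), and since each branching node consumes a distinct terminal as its pivot while its recursive subproblems have pairwise disjoint vertex sets among siblings, the recursion tree has at most $|S|$ branching nodes; hence the running time is $n^{\bigoh(1)}$.

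\smallskip\noindent
Correctness follows from the claim by induction on $|V(T)|$. In the first branch $T-X$ equals $T-v$ with the recursively returned set removed, so the inductive ordering works unchanged. In the second branch, concatenate the inductive orderings of $T[N^-(v)]\setminus X_L$ and $T[N^+(v)]\setminus X_R$ with $v$ placed between them: the only backward arcs of this ordering are those internal to one of the two parts -- which are $S$-avoiding by induction and whose spans stay inside that part, hence avoid $v$ -- together with the arcs running from the $N^+(v)$-side back to the $N^-(v)$-side, which span $v\in S$ and so must be destroyed, but these are exactly the edges of the residual cross-arc graph, all removed by $C$. So the combined ordering satisfies the claim and $X_L\cup X_R\cup C$ is a solution.

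\smallskip\noindent
The heart of the proof -- and the step I expect to be the main obstacle -- is the approximation ratio $\mathbb{E}\bigl[w(\mathtt{Alg}(T,S,w))\bigr]\le 2\cdot\mathrm{OPT}$. I would establish it by a charging argument against a fixed optimal solution $X^\star$, imitating the $\tfrac{2}{j-i+1}$ accounting in the analysis of randomized quicksort. Two facts are clean: a pivot lying in $X^\star$ is already paid for by $X^\star$ and is chosen at most once along the recursion; and when the pivot $v\notin X^\star$, then -- since $X^\star$ hits every triangle through $v$ -- the set $X^\star\cap(N^+(v)\cup N^-(v))$ is a vertex cover of $H_v$, so the cross-arc correction costs at most $w\bigl(X^\star\cap(N^+(v)\cup N^-(v))\bigr)$, while $X^\star$ restricted to each of the two neighbourhoods is feasible there. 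The subtlety is that a single vertex of $X^\star$ can lie in the cross-arc graphs of many pivots along one root-to-leaf path, so these bounds do not telescope termwise; one must show that the random choice of pivot (and of branch) keeps the expected total charge on each vertex of $X^\star$ at most $2$, and in particular that a pivot whose $H_v$-cover cost far exceeds $\mathrm{OPT}$ is kept (rather than absorbed) with probability small enough to contribute only $O(1)$ per unit of $\mathrm{OPT}$ in expectation. Carrying out this amortization -- where the quicksort intuition, of pivoting uniformly and bounding per element the chance it is ever separated before being resolved, does the real work -- is where I expect the difficulty to lie.
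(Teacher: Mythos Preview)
Your structural characterization of solutions and the correctness argument for the recursion are fine, and the running-time bound (at most $|S|$ branching nodes, polynomial work per node) is correct. The genuine gap is exactly where you flag it: the approximation ratio. As stated, your branch~2 only gives factor~$3$, not~$2$, and nothing in the proposal closes this.

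Concretely, fix a pivot $v\notin X^\star$. Induction gives $w(X_L)\le 2\,w(X^\star\cap N^-(v))$ and $w(X_R)\le 2\,w(X^\star\cap N^+(v))$, hence $w(X_L)+w(X_R)\le 2\,w(X^\star)$. For the cover, $X^\star$ restricted to the residual vertex set is a vertex cover of the residual cross-arc graph, so $w(C)\le w(X^\star)$, and summing yields only $3\,w(X^\star)$. This is not slack in the bound: take $X^\star=\{u\}$ with $u\in N^-(v)$ and every cross arc incident to $u$. If the recursive call on $T[N^-(v)]$ returns a $2$-approximate $X_L$ that happens not to contain $u$ (which the induction hypothesis allows), then $u$ survives to the residual, $\{u\}$ is a min-weight cover there, and the total is $w(X_L)+w(C)\le 2w(u)+w(u)=3\,w(X^\star)$. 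Branch~1 in this situation costs $w(v)+2\,w(X^\star)$, which also exceeds $2\,w(X^\star)$ whenever $w(v)>0$. So no convex combination of the two branches over a single pivot gets you to~$2$, and your appeal to a quicksort-style per-vertex expected charge of~$2$ is only a hope, not an argument: you have neither specified the branching distribution nor shown any inequality that would make the amortization go through.

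The paper takes a different route that sidesteps this double-charging entirely. Instead of paying for an exact bipartite cover \emph{on top of} two factor-$2$ recursions, it uses the local-ratio technique on triangles through the pivot: repeatedly find a triangle $\{x_i,a,b\}$, subtract $\min(w(a),w(b))$ from both $a$ and $b$, and put the zero-weight vertex into $C_i$. Since $x_i\notin F_{\mathrm{opt}}$, every such triangle contains at least one vertex of $F_{\mathrm{opt}}$ among $\{a,b\}$, so each weight-layer charges at most twice what $F_{\mathrm{opt}}$ pays; a reverse induction over the layers then shows $w(C_i\cup F_i^+\cup F_i^-)\le 2\,w(F_{\mathrm{opt}})$. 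The second key difference is how the pivot is chosen: the paper does not pivot on an arbitrary terminal but on one that is ``good'' with constant probability, meaning it lies in the first third of an HL-degree ordering of $T[S\setminus F_{\mathrm{opt}}]$, which guarantees both subproblems have at most $(1-\tfrac{1}{30})|S|$ terminals; this is why $28$ independent repetitions suffice to boost the success probability to $\tfrac12$. Finally, the paper needs a separate, non-recursive treatment when $|S\cap F_{\mathrm{opt}}|\le 30$ (guess the intersection and reduce to bipartite vertex cover) and when $|S\cap F_{\mathrm{opt}}|\ge \tfrac{2}{3}|S|$ (local ratio on the $|S|/6$ lightest terminals), neither of which appears in your scheme.
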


The proof of this theorem uses ideas similar to that of \Cref{thm:fvsbin}, with a twist. 
In both the results, we carefully exploit the fact that the size of the ``obstructions" (to a vertex not being part of a cycle) is bounded. 

\subsection{Our Methodology}
\label{subsec:our_methodology}
In this section we describe our main ideas in proving Theorems~\ref{thm:fvsbin} and \ref{thm:sfvst}. For simplicity of presentation we present our ideas for the unweighted case. 
To describe our methods, we first give a very brief outline of the algorithm of  Lokshtanov et al. \cite{lokshtanov20212}. This algorithm starts with the assumption that the tournament $T$ has a feedback vertex set $F$ of size at most $n/2$ (else returning all the vertices of the tournament is a $2$-approximation). Observe that if $F$ is a minimum feedback vertex set of $T$, then $T-F$ is acyclic and has a unique topological ordering. Next it selects a vertex $v$ uniformly at random. The probability that $v$ belongs to the  middle $n/6$ vertices of the topological ordering of $T-F$ is at least $1/6$. This $v$ will act as a pivot. The algorithm first deletes all the vertices that participate in directed triangles with $v$ (of course not $v$!). Since, it is known that $v$ does not belong to $F$, this step can be carried out at the cost of factor $2$ in approximation. After this step it is guaranteed that there is no directed triangle containing $v$ and hence the problem decomposes into two disjoint subproblems of size $\beta n$ ($\beta <1$ a constant): one on the tournament  induced by the in-neighbors of $v$ (i.e., $T[N^{-}(v)]$) and the other on the tournament induced by the out-neighbors of $v$ (i.e., $T[N^{+}(v)]$). The algorithm recursively solves these  problems and combines their solutions to obtain a $2$-approximation for $T$.


To generalize the techniques from tournaments to $\alpha$-bounded digraphs, our first challenge is to come up with a notion equivalent to the {\em unique topological ordering of an acyclic tournament}, that was exploited in the algorithm of Lokshtanov et al. \cite{lokshtanov20212}. 

Towards this we first observe that
every $\alpha$-bounded digraph $G$ on $n$ vertices has a vertex with {\em both in-degree and out-degree (at least) $\nicefrac{(n-2\alpha)}{4\alpha}$} (\Cref{clm:hiinhioutver}). Thereafter we define an {\sf HL-degree ordering} (HL stands for high to low) of an $\alpha$-bounded digraph $G$ as $\sigma(G) = \langle v_1, \dots, v_n \rangle$ over the vertices of $G$ by the recursive application of \Cref{clm:hiinhioutver}. Thus, for $v_1$ both $d^{+}_G(v_1)$ and $d^{-}_G(v_1)$ are at least $\nicefrac{(n-2 \alpha)}{4 \alpha}$; for $v_2$ both $d^{+}_{G - \{v_1\}}(v_2)$ and $d^{-}_{G - \{v_1\}}(v_2)$ at least $\nicefrac{(n-1-2 \alpha)}{4 \alpha}$. Therefore for any $i\in \{2,\dots, n\}$, both $d^{+}_{G -  \{v_1,\dots ,v_{i-1}\}}(v_i)$ and $d^{-}_{G -  \{v_1,\dots ,v_{i-1}\}}(v_i)$ are at least $\nicefrac{(n- (i-1) - 2 \alpha)}{4 \alpha}$. 

For our algorithm, given an $\alpha$-bounded digraph $G$, we will work with a fixed feedback vertex set $F$ and a fixed {\sf HL-degree ordering} $\sigma(G-F)$.  Analogous to the tournament's analysis, we can assume that $|F|\leq \frac{n}{2\alpha}$ (else the whole vertex set of $G$ is a $2\alpha$-approximation). Thereafter, we select a vertex $v$ uniformly at random. We say that $v$ is {\em good} if $v$ belongs to the first $(1/3)^{rd}$ part of $\sigma(G-F)$. The probability of $v$ being good is $\frac{\nicefrac{(n-\frac{n}{2\alpha})}{3}}{n}= (\frac13-\frac{1}{6\alpha})$. This vertex $v$ acts as our pivot. Now, we either delete all the vertices that participate in directed cycles of length at most $2\alpha+1$ with $v$ or delete all the vertices of a directed cycle that is part of the large cycle (i.e., length more than $2\alpha +1$ ) containing $v$, of length at most $2\alpha$
(follows from \Cref{obs:shortinducedcycles}). 
Suppose that $v$ is good. Then this step can be carried out at the cost of factor $2\alpha$ in approximation. 
After this step we are guaranteed that there is no directed cycle  containing $v$, and hence the problem decomposes into two disjoint subproblems  of size $\beta n$, where $\beta <1$ is a constant depending on $\alpha$ alone: one on the digraph induced by the vertices reachable from $v$ (denoted by $G[R_{G - P}(v)]$) and the other on the digraph induced by the vertices not reachable from $v$ (denoted by $G[\overline{R}_{G - P}(v)]$)
, where $P$ is a subset of vertices in $G$ such that the graph $G - P$ does not contain a cycle that contains vertex $v$. We recursively solve these problems, and combine their solutions to obtain a $2\alpha$-approximation for $G$.

For the \sfvst problem, given a tournament $T$ and terminal set $S$, we again work with a fixed subset feedback vertex set $F$ and an {\sf HL-degree ordering} $\sigma(T[S\setminus F])$. But here observe that we only focus on the vertices of $S$, that is, we want to see how $F$ interacts with $S$. First we have observed that hitting the triangles containing vertices in $S$ is equivalent to hitting the cycles that contain the vertices of $S$ (\Cref{prop:shortcycleinsfvst}).
As before, our algorithm starts with the assumption that $|F| \le |S|/2$ (else returning all the vertices of $S$ gives a $2$-approximation). Next it selects a vertex $v$ uniformly at random from $S$. The probability that $v$ belongs to the first $(1/3)^{rd}$ of $\sigma(T[S-F])$ is at least $1/6$. This vertex $v$ acts as a pivot. We first delete all the vertices that participate in directed triangles with $v$. Let the set of vertices which participate in directed triangles with $v$ be denoted by $P$. Since we know that $v$ does not belong to $F$, this step can be carried out at the cost of factor $2$ in approximation. After this step we are guaranteed that there is no directed triangle containing $v$, and hence there is no directed cycle containing $v$. Thereafter the problem decomposes into two disjoint subproblems with strictly smaller set of terminals: one on the tournament induced by $R_{T-P}(v)$ with terminal set $S\cap R_{T-P}(v)$ and the other on the tournament induced by $\overline{R}_{T-P}(v)$ with terminal set $S \cap \overline{R}_{T-P}(v)$. We recursively solve these  problems, and combine their solutions to obtain a $2$-approximation for $T$.

\section{Preliminaries}
In this paper, we deal with simple directed graphs (\textit{digraphs}, in short) containing no parallel edges/arcs.
We work in the setting of vertex weighted digraphs:  by $(G,w)$ we denote a vertex weighted digraph $G$ with weight function $w: V(G)\rightarrow \mathbb{N}_{\ge 0}$. The weight of a subset of vertices is the sum of weights of the vertices in the subset. Note that the setting of unweighted graphs is a special case of weighted graphs.
For any induced subgraph $H$ of a vertex weighted graph $(G,w)$, we will assume that $w$ defines a weight function when restricted to $V(H)$.
If there is an arc $(u,v) \in E(G)$, then $u$ is an \textit{in-neighbor} of $v$, and $v$ is an \textit{out-neighbor} of $u$.  
The \textit{in-neighborhood} of a vertex $x$, denoted by $N^{-}(x)=\{v\mid (v,x)\in E(G)\}$, is the set of in-neighbors of $x$. 
The \textit{in-degree} of a vertex $x$, denoted by $d^{-}_{G}(x)=|N^{-}(x)|$, is the number of in-neighbors of $x$.
The \textit{out-neighborhood} and \textit{out-degree} of a vertex $x$, denoted by $N^{+}(x)$ and $d^{+}_{G}(x)$ resp., are defined analogously. 


We say a vertex $u$ is \textit{reachable} from a vertex $v$, if there is a directed path which contains both the vertices $u$ and $v$ and vertex $v$ appears before vertex $u$ in the sequence of the vertices which defines the directed path. 
By $R_{G}(x)$, we denote the set of all vertices other than $x$ reachable from $x$ in $G$. 
By $\overline{R_{G}}(x)$, we denote the set of all vertices not reachable from $x$ in $G$.
Observe that $(R_{G}(x), \overline{R_{G}}(x), \{x\})$ form a partition of $V(G)$.

A \textit{feedback vertex set} (FVS) in $G$ is a subset of vertices $S\subseteq V(G)$ such that $G-S$ is acyclic. 
Given a family $X=\{S_1, \ldots , S_l\}$ where $S_i\subseteq V(G)$ for each $i\in [l]$, we call $S_a\in X$ \textit{lightest} if for all $i\in [l]$ we have $w(S_a)\leq w(S_i)$. Similarly, we call $S_b\in X$ \textit{heaviest} if for all $i\in [l]$ we have $w(S_i)\leq w(S_b)$. An FVS $F_{opt}$ in $G$ is an \textit{optimal} (also \textit{minimum}) solution of the instance $(G,w)$ if for every other FVS $S$ in $G$ we have $w(S)\geq w(F_{opt})$ i.e. $F_{opt}$ is the lightest among all FVSs in $G$. 
An FVS $F$ in $G$ is called a \textit{$2\alpha$-approximate solution} of the instance $(G,w)$ if $w(F)\leq 2\alpha\cdot w(F_{opt})$.

An algorithm is a \textit{randomized factor $f$ approximation algorithm} for a problem $\mathcal{P}$ if, for each instance $\mathcal{I}$ of $\mathcal{P}$, with probability at least $\nicefrac12$, it returns a solution for $\mathcal{I}$ of weight at most $f\times OPT_{\mathcal{I}}$ where $OPT_{\mathcal{I}}$ denotes the weight of an optimal solution for $\mathcal{I}$.

We will use the following structural results on digraphs with bounded independence number throughout our paper\footnote{Missing proofs (marked with $\dagger$) are in the full version of the paper}.



    




\begin{lemma}[\cite{seymour15}]
\label{lem: highinout}
Let $G$ be a simple digraph with $n$ vertices and independence number $\alpha\geq 1$. Then there exists a vertex in $V (G)$ with out-degree at least $\frac{n-\alpha}{2\alpha}$. Similarly, there exists a vertex
in $V (G)$ with in-degree at least $\frac{n-\alpha}{2\alpha}$.
\end{lemma}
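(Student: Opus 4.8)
The plan is to prove only the out-degree statement; the in-degree statement then follows by applying it to the digraph obtained from $G$ by reversing every arc (this preserves the independence number), or equivalently by rerunning the averaging step below with in-degrees. First I would pass from $G$ to an undirected setting: let $\hat{G}$ be the simple undirected graph on vertex set $V(G)$ in which $uv$ is an edge whenever $(u,v)\in E(G)$ or $(v,u)\in E(G)$, and let $\overline{\hat{G}}$ denote its complement. The key observation is that a subset of vertices is independent in $G$ if and only if it is pairwise non-adjacent in $\hat{G}$, i.e.\ a clique in $\overline{\hat{G}}$; hence $\overline{\hat{G}}$ has no clique on $\alpha+1$ vertices.

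Next I would apply Tur\'an's theorem to $\overline{\hat{G}}$: an $n$-vertex $K_{\alpha+1}$-free graph has at most $\left(1-\tfrac1\alpha\right)\tfrac{n^2}{2}$ edges. Consequently $|E(\hat{G})| = \binom{n}{2} - |E(\overline{\hat{G}})| \ge \binom{n}{2} - \left(1-\tfrac1\alpha\right)\tfrac{n^2}{2}$, and simplifying the right-hand side yields $|E(\hat{G})| \ge \tfrac{n(n-\alpha)}{2\alpha}$. Since each edge of $\hat{G}$ is realised by at least one arc of $G$, we get $\sum_{v\in V(G)} d^{+}_{G}(v) = |E(G)| \ge |E(\hat{G})| \ge \tfrac{n(n-\alpha)}{2\alpha}$, so by averaging some vertex $v$ satisfies $d^{+}_{G}(v) \ge \tfrac{n-\alpha}{2\alpha}$. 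Reading the same quantity $|E(G)|$ as a sum of in-degrees and averaging gives a vertex of in-degree at least $\tfrac{n-\alpha}{2\alpha}$.

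I do not expect a genuine obstacle: the argument is essentially a reduction to Tur\'an's theorem plus an averaging step. The only points needing care are verifying that the independence number of $G$ transfers to the clique number of $\overline{\hat{G}}$ (immediate from the definitions, and unaffected by whether $G$ contains digons) and checking the identity $\binom{n}{2} - \left(1-\tfrac1\alpha\right)\tfrac{n^2}{2} = \tfrac{n(n-\alpha)}{2\alpha}$; as a sanity check, $\alpha=1$ recovers the familiar bound $\tfrac{n-1}{2}$ for tournaments, and the inequality is vacuous when $n\le\alpha$. If a self-contained write-up is preferred, Tur\'an's theorem can be replaced by its standard greedy/weight-shifting proof applied directly to $\overline{\hat{G}}$, but citing it keeps the proof short.
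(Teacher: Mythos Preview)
The paper does not prove this lemma; it is quoted verbatim from Fradkin--Seymour \cite{seymour15} with no argument given (and it does not appear in the appendix of missing proofs either), so there is no in-paper proof to compare against. Your Tur\'an-plus-averaging argument is correct: the identification of independent sets in $G$ with cliques in $\overline{\hat G}$ is sound (digons are irrelevant), the algebra $\binom{n}{2}-\bigl(1-\tfrac1\alpha\bigr)\tfrac{n^2}{2}=\tfrac{n(n-\alpha)}{2\alpha}$ checks out, and the passage $|E(G)|\ge|E(\hat G)|$ together with $\sum_v d^+_G(v)=|E(G)|$ gives the averaging bound; the in-degree statement follows symmetrically as you note.
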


\begin{lemma}[$\dagger$]
    \label{clm:hiinhioutver}
    Let $G$ be a simple digraph with $n$ vertices and independence number $\alpha\geq 1$. Then there exists a vertex in $V(G)$ which has both in-degree and out-degree\hide{$d^{+}(v)$ and $d^{-}(v)$ are} at least $\nicefrac{(n - 2 \alpha)}{4 \alpha}$.
\end{lemma}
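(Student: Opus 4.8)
The plan is to deduce \Cref{clm:hiinhioutver} from \Cref{lem: highinout} by a one-line dichotomy. First I would clear the trivial range: if $n \le 2\alpha$ then $\tfrac{n-2\alpha}{4\alpha} \le 0$, and every vertex vacuously has in-degree and out-degree at least this quantity, so there is nothing to prove. Hence assume $n > 2\alpha$ and set $\tau := \tfrac{n-2\alpha}{4\alpha} > 0$; the whole point of this particular value is that $\tfrac{(n/2)-\alpha}{2\alpha}$ collapses to exactly $\tau$.

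Next I would argue by contradiction. Suppose no vertex $v$ of $G$ satisfies both $d^{+}_G(v) \ge \tau$ and $d^{-}_G(v) \ge \tau$. Then $V(G) = A^{+} \cup A^{-}$, where $A^{+} = \{\, v : d^{+}_G(v) < \tau \,\}$ and $A^{-} = \{\, v : d^{-}_G(v) < \tau \,\}$, so at least one of these two sets has size at least $n/2$. By symmetry assume $|A^{-}| \ge n/2$, and let $m := |A^{-}|$.

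Now apply \Cref{lem: highinout} to the induced subdigraph $G[A^{-}]$, which has $m \ge n/2 > \alpha$ vertices and independence number at most $\alpha$. Using that $x \mapsto \tfrac{m-x}{2x}$ is non-increasing for $m > 0$ (so that the bound of \Cref{lem: highinout} for the true independence number of $G[A^{-}]$ is at least $\tfrac{m-\alpha}{2\alpha}$), we obtain a vertex $v \in A^{-}$ whose in-degree inside $G[A^{-}]$ is at least $\tfrac{m-\alpha}{2\alpha} \ge \tfrac{(n/2)-\alpha}{2\alpha} = \tau$. Then $d^{-}_G(v) \ge d^{-}_{G[A^{-}]}(v) \ge \tau$, contradicting $v \in A^{-}$. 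The case $|A^{+}| \ge n/2$ is identical, applying the out-degree half of \Cref{lem: highinout} to $G[A^{+}]$. Either way we reach a contradiction, which proves the lemma.

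There is essentially no hard step here: the argument is a short reduction to \Cref{lem: highinout}. The only things to watch are the boundary case $n \le 2\alpha$ and checking $m \ge n/2 > \alpha$ so that $G[A^{-}]$ is nonempty and the invoked bound is meaningful, together with the (routine) monotonicity remark that permits replacing the independence number of the induced subgraph by the global bound $\alpha$.
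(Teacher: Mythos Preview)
Your argument is correct and essentially the same as the paper's: both take a subset of at least $n/2$ vertices, apply \Cref{lem: highinout} to the induced subdigraph, and read off the bound $\tfrac{(n/2)-\alpha}{2\alpha}=\tfrac{n-2\alpha}{4\alpha}$. The only cosmetic difference is the choice of partition: the paper splits $V(G)$ directly by the comparison $d^{+}_G(v)\ge d^{-}_G(v)$ and argues constructively (a high-in-degree vertex in the larger half automatically has at least as large out-degree), whereas you split by the threshold $\tau$ and argue by contradiction; both routes are equally short.
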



\begin{lemma}
\label{obs:shortinducedcycles}

Let $G$ be a simple digraph with $n$ vertices and independence number $\alpha\geq 1$. For a vertex $x \in V(G)$, let $C \subseteq V(G)$ denote the shortest cycle containing $x$. Then either $|C| \le 2 \alpha + 1$ or there exists an induced cycle $C' \subset C$ of length at most $2 \alpha$ that does not contain $x$ (i.e., $C' \subseteq C \setminus \{x\}$).
\end{lemma}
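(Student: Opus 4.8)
The plan is to use the minimality of $C$ to pin down exactly which arcs of $G$ can join two vertices of $C$, and then to carve $C'$ out of a shortest ``backward'' chord. Write $C$ as $x = c_0 \to c_1 \to \cdots \to c_{k-1} \to c_0$ and assume $k \ge 2\alpha + 2$, since otherwise the first alternative of the statement already holds.

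The central step is a chord-structure claim: every arc $c_p \to c_q$ of $G$ that is not an arc of $C$ satisfies $1 \le q < p \le k-1$. Indeed, such an arc together with the sub-path $c_q \to c_{q+1} \to \cdots \to c_p$ of $C$ forms a cycle $Z$ with $2 \le |Z| < k$; if $x = c_0$ were a vertex of $Z$, then $Z$ would be a cycle through $x$ strictly shorter than $C$, a contradiction. Requiring $c_0 \notin V(Z)$ and translating this into the cyclic positions of $p$ and $q$ yields exactly $1 \le q < p \le k-1$. In particular no chord of $C$ is incident to $x$, and every chord is a ``backward'' arc $c_p \to c_q$ with $q < p$ among $c_1, \dots, c_{k-1}$; this is our substitute for the backward-arc-free topological order of an acyclic tournament used in \cite{lokshtanov20212}.

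Next I would locate a backward chord of small span via the independence number. Set $I := \{c_0, c_2, c_4, \dots, c_{2\alpha}\}$, a set of $\alpha + 1$ distinct vertices since $2\alpha \le k-1$. By the chord-structure claim, the only arcs that can occur between two vertices of $I$ are backward chords $c_b \to c_a$ with $a,b$ both even, $2 \le a < b \le 2\alpha$, hence of span $b - a \le 2\alpha - 2$ (every pair involving $c_0$ is ruled out outright, as no chord touches $x$ and neither $c_0 \to c_{2t}$ nor $c_{2t} \to c_0$ is an arc of $C$ for these indices). So if no backward chord of $G$ had span at most $2\alpha - 2$, then $I$ would be independent of size $\alpha+1$, contradicting that $G$ has independence number $\alpha$. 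Hence $G$ has a backward chord; fix one, $c_P \to c_Q$, of minimum span $s := P - Q \le 2\alpha - 2$.

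Finally I would check that $C' := c_Q \to c_{Q+1} \to \cdots \to c_P \to c_Q$ does the job: it has length $s+1 \le 2\alpha$, its vertices lie in $\{c_1,\dots,c_{k-1}\} = V(C)\setminus\{x\}$, and it is induced, because any arc of $G$ with both endpoints among $c_Q,\dots,c_P$ is either an arc of $C'$ or a backward chord of span at most $s$, which by minimality has span exactly $s$ and so equals $c_P \to c_Q$. The only delicate point is the first step: one must handle the cyclic indexing carefully so that ``$Z$ avoids $x$'' becomes the clean inequality $q < p$; the remaining steps are then routine. (Sanity check: for $\alpha=1$ the set $I = \{c_0,c_2\}$ would already be independent whenever $k\ge 4$, which is impossible, so $k\le 3$ there, recovering the folklore fact that the shortest cycle through a vertex of a tournament is a triangle.)
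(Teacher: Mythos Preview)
Your proof is correct. Both your argument and the paper's rest on the same mechanism---the independence bound forces an arc among $\alpha+1$ well-spaced vertices, and minimality of $C$ dictates where that arc can sit---but the executions differ. The paper argues by contradiction: it posits a shortest induced cycle $C'=\langle v_1,\dots,v_\ell\rangle$ in $G[C]$ avoiding $x$ with $\ell\ge 2\alpha+1$, takes alternate vertices $\{v_2,v_4,\dots,v_{2\alpha}\}$ of $C'$, adjoins $x$, and shows this set of size $\alpha+1$ is independent. You instead work directly with $C$: after the chord-structure claim (every chord of $C$ is a backward arc avoiding $c_0$), you apply the independence bound to $\{c_0,c_2,\dots,c_{2\alpha}\}\subseteq V(C)$, extract a backward chord of span at most $2\alpha-2$, and build $C'$ explicitly from a minimum-span chord. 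Your route is constructive and sidesteps two points the paper leaves tacit---that some induced cycle avoiding $x$ exists in $G[C]$ at all, and that arcs $v_j\to x$ (not only $x\to v_j$) must be excluded for $I\cup\{x\}$ to be independent. As a side effect your argument actually yields $|C'|\le 2\alpha-1$, one better than the stated bound; and for $\alpha=1$ it directly collapses to the first alternative, as you note in your sanity check.
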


\begin{proof}
Consider the case $|C| > 2 \alpha + 1$. Assume for the sake of contradiction that $C' = \langle v_1, \dots, v_\ell \rangle$ is the shortest induced cycle in $C$ that does not contain $x$ with $\ell \ge 2 \alpha + 1$. Additionally, assume without loss of generality that $C'$ is enumerated in such a way that it appears in that order in $C$ and $x$ appears before $v_1$ and after $v_\ell$ in $C$.

There is no arc $x v_j \in E(G)$ such that $v_j \in C'\setminus \{v_1\}$ since such an arc would imply the existence of the cycle $\langle v_j, \dots, v_\ell, \dots, x \rangle$ which contradicts the fact that $C$ is the shortest cycle in $G$ containing $x$. Since $C'$ is an induced cycle, the set $I = \{v_2, v_4, \dots, v_{2 \alpha}\}$ forms an independent set of size $\alpha$. By the previous argument $I \cup \{x\}$ is an independent set of size $\alpha+1$ in $G$, a contradiction. 
\qed
\end{proof}





\section{DFVS in Graphs of Bounded Independence Number}







Throughout this section we assume that $G$ is a digraph on $n$ vertices with independence number $\alpha$. The following two observations about FVSs in $(G,w)$ which we will use throughout our results in this section, follow from the hereditary property of acyclicity of digraphs. 

\begin{observation}\label{obs1}
    Let $F$ be an FVS in $(G,w)$; and let $S\subseteq V(G)$. Then, $F\setminus S$ is an FVS in $(G-S,w)$.
\end{observation}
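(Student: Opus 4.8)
The plan is to reduce the statement to the hereditary property of acyclicity of digraphs, namely that every induced subdigraph of an acyclic digraph is again acyclic. First I would unpack the definitions: asserting that $F\setminus S$ is an FVS in $(G-S,w)$ means precisely that the digraph $(G-S)-(F\setminus S)$ is acyclic, so it suffices to establish acyclicity of that particular induced subdigraph.

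The key step is the elementary set-theoretic identity
\[
(G-S)-(F\setminus S) \;=\; G-\bigl(S\cup(F\setminus S)\bigr) \;=\; G-(S\cup F),
\]
which says that deleting $S$ and then $F\setminus S$ from $G$ leaves exactly the subdigraph of $G$ induced on $V(G)\setminus(S\cup F)$. Since $V(G)\setminus(S\cup F)\subseteq V(G)\setminus F$, this is an induced subdigraph of $G-F$.

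Finally, because $F$ is an FVS in $G$, the digraph $G-F$ is acyclic; as any directed cycle in an induced subdigraph is also a directed cycle in the ambient digraph, acyclicity passes to $G-(S\cup F)$. Hence $(G-S)-(F\setminus S)$ is acyclic, i.e.\ $F\setminus S$ is an FVS in $(G-S,w)$, as claimed. I do not expect any genuine obstacle here: the only point requiring care is the vertex-set bookkeeping for deleting two sets in sequence versus deleting their union, which the displayed identity takes care of.
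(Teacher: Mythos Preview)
Your proof is correct and follows exactly the approach the paper indicates: the paper simply states that the observation follows from the hereditary property of acyclicity of digraphs, and your argument spells out precisely this, via the identity $(G-S)-(F\setminus S)=G-(S\cup F)\subseteq G-F$.
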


\begin{observation}\label{obs2}
Suppose that $F$ is an optimal FVS in $(G,w)$ and that $S$ is a subset of $F$. Then, $F \setminus S$ is an optimal FVS in $(G - S,w)$, of weight $w(F)-w(S)$.
\end{observation}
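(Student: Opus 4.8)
The plan is to establish the two assertions separately: that $F \setminus S$ is an FVS in $(G-S,w)$ of weight exactly $w(F)-w(S)$, which is essentially immediate, and that it is moreover an \emph{optimal} such FVS, which I would obtain by a short exchange argument.

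First I would invoke \Cref{obs1} with the deleted set taken to be $S$ itself: since $F$ is an FVS in $(G,w)$ and $S \subseteq V(G)$, the set $F \setminus S$ is an FVS in $(G-S,w)$. For the weight, note that $S \subseteq F$, so $F$ is the disjoint union of $S$ and $F \setminus S$; since the weight of a vertex set is the sum of the weights of its vertices, $w(F\setminus S) = w(F) - w(S)$.

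For optimality, suppose toward a contradiction that $(G-S,w)$ admits an FVS $F'$ with $w(F') < w(F\setminus S) = w(F) - w(S)$. Since $F' \subseteq V(G-S) = V(G)\setminus S$, the sets $F'$ and $S$ are disjoint. I would then check that $F' \cup S$ is an FVS in $(G,w)$: deleting it leaves $G - (F'\cup S) = (G-S) - F'$, which is acyclic precisely because $F'$ is an FVS in $G-S$. But then $w(F'\cup S) = w(F') + w(S) < w(F)$, contradicting the assumed optimality of $F$ in $(G,w)$. Hence no such $F'$ exists, and $F\setminus S$ is optimal in $(G-S,w)$.

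I do not expect a genuine obstacle here; the only points requiring care are the disjointness of $F'$ and $S$ (which is what makes the weight of the ``lifted'' solution $F'\cup S$ add up to exactly $w(F')+w(S)$) and the hereditary, induced-subgraph nature of acyclicity, which is already the content underlying \Cref{obs1}.
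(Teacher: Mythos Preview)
Your argument is correct and is exactly the standard exchange proof one would expect; the paper itself states this observation without proof, attributing it (together with \Cref{obs1}) to the hereditary property of acyclicity, so your writeup in fact supplies more detail than the paper does.
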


The following lemma shows the interaction of an FVS $F$ in $G$ with the subgraphs $G[R_{G}(x)]$ and $G[\overline{R}_G(x)]$.

\begin{lemma}[$\dagger$]\label{lem:divide-conquer-FVS}\hide{Suppose that $(G,w)$ is an instance of \wFVSbIN.} Suppose that $x \in V(G)$ is a vertex that is not part of any cycle in $G$, then the following holds: $F$ is an FVS in $G$ if and only if $F\cap R_{G}(x)$ is an FVS in $G[R_{G}(x)]$ and $F\cap \overline{R}_G(x)$ is an FVS in $G[\overline{R}_G(x)]$.
\end{lemma}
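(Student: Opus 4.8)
The statement is an "if and only if" about FVS decomposition when $x$ lies on no cycle. Let me think about the structure. We have $x$ not on any cycle, so $(R_G(x), \overline{R}_G(x), \{x\})$ partitions $V(G)$. The key structural fact I'd want: there are no arcs from $R_G(x)$ back to $\overline{R}_G(x) \cup \{x\}$ — wait, let me reconsider. Actually from $x$ we reach $R_G(x)$; every vertex reachable from a vertex of $R_G(x)$ is reachable from $x$, so it's in $R_G(x)$ (can't be $x$ since $x$ is on no cycle). So there are no arcs from $R_G(x)$ to $\overline{R}_G(x)$, and no arc from $R_G(x)$ to $x$. Also no arc from $x$ into $\overline{R}_G(x)$ by definition of reachability. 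So the only "crossing" arcs are: $x \to R_G(x)$, $\overline{R}_G(x) \to x$, $\overline{R}_G(x) \to R_G(x)$. Crucially, there's no cycle that uses vertices from more than one block.

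So the plan: first establish this "no mixed cycle" structural claim — any cycle in $G$ lies entirely within $G[R_G(x)]$ or entirely within $G[\overline{R}_G(x)]$ (it cannot contain $x$ by hypothesis, and it cannot straddle the two blocks because the crossing arcs all go "one way" so you could not close a cycle). This is the main step and the only place real argument is needed; I'd prove the directional constraints on crossing arcs as above and note a cycle needs arcs in both directions across any bipartition of its vertex set, contradiction.

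Given the structural claim, both directions follow. ($\Rightarrow$) If $F$ is an FVS in $G$, then $G - F$ is acyclic; by Observation~\ref{obs1} (or directly), $G[R_G(x)] - F = (G-F)[R_G(x) \setminus F]$ is a subgraph of an acyclic graph, hence acyclic, so $F \cap R_G(x)$ is an FVS in $G[R_G(x)]$; symmetrically for $\overline{R}_G(x)$. ($\Leftarrow$) Suppose $F \cap R_G(x)$ is an FVS in $G[R_G(x)]$ and $F \cap \overline{R}_G(x)$ is an FVS in $G[\overline{R}_G(x)]$. Take any cycle $C$ in $G$. By the structural claim, $C$ lies entirely in $G[R_G(x)]$ or entirely in $G[\overline{R}_G(x)]$; in the first case $F \cap R_G(x)$ hits it, in the second $F \cap \overline{R}_G(x)$ hits it; either way $F$ hits $C$. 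So $G - F$ is acyclic and $F$ is an FVS in $G$.

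I expect no serious obstacle; the one thing to be careful about is the precise handling of crossing arcs — in particular confirming that $x$ has no in-arc from $R_G(x)$ and no out-arc into $\overline{R}_G(x)$, and that $R_G(x)$ is "closed under reachability" inside $G$ — all of which follow directly from the definition of $R_G(x)$ together with the hypothesis that $x$ is on no cycle. A minor subtlety: one should note $C$ has length at least $2$ (or $3$) and uses at least one arc in each direction across the bipartition $(V(C) \cap R_G(x), V(C) \setminus R_G(x))$ whenever that bipartition is nontrivial, which is the crux of ruling out mixed cycles.
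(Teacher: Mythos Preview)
Your proposal is correct and takes essentially the same approach as the paper: the forward direction is hereditary, and the backward direction rests on the structural fact that no cycle can straddle $R_G(x)$ and $\overline{R}_G(x)$. The paper's argument is just terser---it observes directly that a path in $C$ from some $x_1 \in R_G(x)$ to some $x_2 \in \overline{R}_G(x)$ contradicts $x_2 \notin R_G(x)$---whereas you spell out the one-way orientation of all crossing arcs; both are the same idea.
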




\subsection{Technical Overview}

In this section we will briefly describe the work flow of our recursive algorithm FindFVS (\Cref{alg:findfvs}) which is formally presented in the following section and its correctness  analyzed in \Cref{ss:Analysis}. 
For the base case $n\le 30\alpha$, we compute an optimal FVS in $n^{\bigoh(\alpha)}$ time by checking all subsets of vertices. Let $F_{opt}$ be an optimal solution for $(G,w)$ which is an instance for the \DFVSbIN problem. We consider the following two cases:

\begin{itemize}[wide=0pt]
    \item{\textbf{Case 1 ($|F_{opt}|\ge \nicefrac{2n}{3\alpha}$):}} Let $L\subseteq V(G)$ be a set of $\nicefrac{n}{6\alpha}$ lightest vertices in $G$ and $r$ be the heaviest vertex in $L$. We define the new weight function $w^\prime: V(G)\setminus L \rightarrow \mathbb{N}_{\geq 0}$ which assigns the weight $w(v) - w(r)$ to each vertex $v$ in $G-L$. Our algorithm recursively finds an FVS $F$ in $(G- L,w^\prime)$. Combining $F$ with $L$ (i.e., $F \cup L$), we have a $2\alpha$-approximate FVS in $(G,w)$ with probability at least $\nicefrac{1}{2}$ (by \Cref{clm:largeopt}). 

    \item{\textbf{Case 2 ($|F_{opt}|< \nicefrac{2n}{3\alpha}$):}} 
    In this case, we randomly pick/sample a vertex $x\in V(G)$ and find a subset of vertices $P$ such that in $G - P$ there is no cycle that contains $x$. Then, since any cycle in $G - P$ is contained completely inside exactly one of $G[R_{G-P}(x)]$ and $G[\overline{R}_{G - P}(x)]$, we obtain an $FVS$ in $G$ by combining $P$ with FVSs in those two subgraphs. 
    
    With probability at least $\nicefrac{(n-\frac{2n}{3\alpha})}{n}= (1-\frac{2}{3\alpha})$, $x$ is not part of $F_{opt}$, i.e., $x \in V(G) \setminus F_{opt}$. Moreover, with probability at least $\nicefrac{(1-\frac{2}{3\alpha})}{3}=(\frac13-\frac{2}{9\alpha})$, $x$ is in the first $\nicefrac13^{rd}$ part of $\sigma(G - F_{opt})$. Such a vertex $x$ has in-degree and out-degree at least $\frac{n}{18\alpha}+\frac1{4\alpha}-\frac12$ (by \Cref{clm:degboundgoodver}). Consequently, both $|R_{G-P}(x)|$ and $|\overline{R}_{G - P}(x)|$ are at most $n(1-\nicefrac{1}{30\alpha})$ (by \Cref{eq:bounding-problem-size}).
    

We perform the following iterative procedure to compute $P$, which is initialized as $\emptyset$. Let $C$ be a shortest cycle in $G-P$ that contains $x$ and let $C' \subseteq C$ be the shortest induced cycle in $C$. 
As $x\notin F_{opt}$, therefore $F_{opt}\cap (C' \setminus\{x\}) \ne \emptyset$.
The crucial point to note here is that $|C'\setminus \{x\}|\le 2\alpha $ regardless of $|C|$ (by \Cref{obs:shortinducedcycles}).

Since $F_{opt}$ is an $FVS$ and $x \notin F_{opt}$, we have $|F_{opt}\cap (C^\prime\setminus\{x\})| \ge 1$. 
Even though $C'$ may not contain $x$, \textit{hitting} (i.e., picking vertices from) the cycle $C^\prime$ implies hitting the cycle $C$ that contains $x$. Now observe that if we are in the unweighted setup of the \DFVSbIN problem, then to hit the cycle $C'$ we can pick all vertices in $C' \setminus \{x\}$ in $P$.
Therefore, we are getting a $2\alpha$-approximate solution conditioning on the event that $x \notin F_{opt}$.

Observe that this strategy fails if we are in the weighted setup of the \DFVSbIN problem. We cannot simply pick all the vertices of $C'\setminus \{x\}$. We resolve this issue by using the \textit{``local ratio"} technique as follows.
We find the lightest vertex (say $u$) in $C^\prime \setminus \{x\}$, add it to $P$ and update the weights of each vertex $v \in C^\prime\setminus \{x\}$ to $w(v)-w(u)$. We repeat the procedure until there is no cycle in $G-P$ which contains vertex $x$. 

Let $w^\prime$ be the weight function at the end of the procedure.
Next we recursively get $2\alpha$-approximate FVSs in $(G[R_{G - P}(x)],w^\prime)$ (say $F_1$) and in $(G[\overline{R}_{G - P}(x)],w^\prime)$ (say $F_2$) resp., each with probability at least $\nicefrac12$. Thereafter we construct $P\cup F_1\cup F_2$ which is a $2\alpha$-approximate FVS in $(G,w)$ (by \Cref{clm:smalloptapproxfactor}) with probability at least $\nicefrac1{12}-\nicefrac1{18\alpha}$ (by \Cref{eqn:probability_lb}).

To boost the success probability of the algorithm to the required lower bound $\nicefrac{1}{2}$, we repeat the random experiment $28\alpha$ times, i.e. we repeat the procedure of sampling the vertex $x$, computing the set $P$, and solving the two recursive subproblems $(G[R_{G - P}(x)],w^\prime)$ and $(G[\overline{R}_{G - P}(x)],w^\prime)$, $28\alpha$ times.
\end{itemize}
But we don't know $F_{opt}$ during the execution of the algorithm (and hence which of the cases we fall into). Thus, we compute $28\alpha +1$ solutions for the instance $(G,w)$: one solution for the Case 1 and $28\alpha$ solutions for the Case 2. Thereafter we take the lightest among all the $28\alpha+1$ solutions. 

Now to analyse the time complexity of the algorithm, observe that in the case that $|F_{opt}|< \nicefrac{2n}{3\alpha}$, we are making two recursive calls on 
subproblems of size at most $n\cdot(1-\nicefrac{1}{30\alpha})$ and repeating the procedure for $28\alpha$ times; for the case when $|F_{opt}|\geq \nicefrac{2n}{3\alpha}$, we are making one recursive call to the instance $(G-L,w^\prime)$ of size $n\cdot (1- \nicefrac{1}{6\alpha})$. 
Thus, we get the recurrence $T(n)\le 2 \cdot 28 \alpha T\left(n\left(1-\frac{1}{30\alpha}\right)\right) + \bigoh(n^5)+T\left(n\left(1-\frac{1}{6\alpha}\right)\right)$ for the time complexity of the algorithm FindFVS, which solves to $n^{\bigoh(\alpha^2)}$.



\subsection{The Algorithm}

We compute $(28\alpha + 1)$ FVSs $\{F_i\}_{i=0}^{28\alpha}$ and return the lightest set among them. The algorithm is recursive. Each recursive call is made on a graph with strictly fewer vertices. When $|V(G)| \le 30 \alpha$ we solve the problem by brute force searching over all subsets of vertices.

\begin{definition}\label{def:update}We use two ``weight update" functions both of which take as input a weight function $w$ and a subset of vertices $Q$ and return a new weight function defined as follows.

\begin{enumerate}
    \item \textsf{update1}$(w,Q)$: Let $h$ be the heaviest vertex in $Q$. It returns $w': V(G) \setminus Q \rightarrow \mathbb{N}$ where 
    \[
    w'(v) = w(v) - w(h) \textnormal{ for each }v \in V(G) \setminus Q.
    \]
    \item \textsf{update2}$(w,Q)$: Let $\ell$ be the lightest vertex in $Q$. It returns $w': V(G) \rightarrow \mathbb{N}$ where 
    \[
    w'(v) = 
    \begin{cases}
        w(v)-w(\ell) & \textnormal{if } v \in Q;\\
        w(v) & \textnormal{otherwise}.
    \end{cases}
    \]
\end{enumerate}
\end{definition}
\begin{algorithm}[H]
  \scriptsize
  \begin{algorithmic}[1]    \caption{\textsf{FindFVS}}\label{alg:findfvs}
    \Require{a digraph $G=(V,E)$, vertex weights $w: V \rightarrow \mathbb{N}_{\ge 0}$}
    \Ensure{a subset of vertices $X$}
    \If{$n \le 30 \alpha$}
      \State Iterate over all subsets of $V(G)$, and return a optimal FVS of $G$, say $X$
    \EndIf
    \State Let $L$ be a set of $\nicefrac{n}{6\alpha}$ lightest vertices in $V(G)$ acc. to $w$
    \State $F_{0} := L \cup \textsf{FindFVS}\left(G - L, \textsf{update1}(w,L)\right)$
    \ForAll{$i \in [28\alpha]$}
      \State Pick $x_i$ uniformly at random from $V(G)$
      \If{$\min\left\{d^+_G(x_i), d^-_G(x_i)\right\} < \frac{n}{18\alpha}+\frac1{4\alpha}-\frac12$}
        \State $F_i:= V(G)$
        \State \textbf{continue}
      \EndIf
      \State $C_i = \{\}$
      \State $w_i = w$
      \While{there is a cycle in $G \!-\! C_i$ containing $x_i$} \Comment{\textcolor{blue}{Eliminating cycles with $x_i$}}
        \State Let $C$ be a shortest cycle in $G - C_i$ containing $x_i$
        \State Let $C' \subseteq C$ be a shortest induced cycle inside $C$    
        \State Let $v$ be a lightest vertex in $C' \setminus \{x_i\}$ acc. to $w_i$
        \State $C_i = C_i \cup \{v\}$    
        \State $w_i = \textsf{update2}(w_i, C' \setminus \{x_i\})$
      \EndWhile
      \State $F_i$ := $C_i \cup$ \textsf{FindFVS}($G[R_{G - C_i}(x_i)], w_i$) $\cup$ \textsf{FindFVS}($G[\overline{R}_{G - C_i}(x_i)], w_i$)
    \EndFor
    \State Return a lightest set among $F_0,F_1, \dots, F_{28 \alpha}$ acc. to $w$, say $X$
  \end{algorithmic}
\end{algorithm}

\subsubsection{Analysis}\label{ss:Analysis}
~\\
\noindent\textit{Proof of \Cref{thm:fvsbin}.}
We will prove this by induction on $n$. For the base case we consider $n \le 30 \alpha$; where by iterating over all subsets of vertices, we can, in $\bigoh(2^{30\alpha} n^2)$ time, find a minimum weight FVS in $G$. Hence, from now on, we will analyze when $n > 30 \alpha$. 



Let $F_{opt}$ be an optimal \FVS in $G$. If $|F_{opt}| \ge \nicefrac{2n}{3\alpha}$, then we claim that $F_0$ satisfies the theorem statement.
Note that the algorithm returns $X$ such that $w(X) \le w(F_i)$ for each $i \in [0,28\alpha]$, i.e. $X$ is the lightest set among $\{F_0,\ldots,F_{28\alpha}\}$.

\begin{newclaim}[$\dagger$]
\label[newclaim]{clm:largeopt}
Suppose that $|F_{opt}| \ge \nicefrac{2n}{3\alpha}$. Then, with probability at least $\nicefrac12$, $F_0$ is a $2\alpha$-approximate FVS in $(G,w)$.
\end{newclaim}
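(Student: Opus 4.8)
The plan is to prove the claim as part of the induction on $n$ that underlies Theorem~\ref{thm:fvsbin}: assume FindFVS returns, with probability at least $\nicefrac12$, a $2\alpha$-approximate FVS on every instance with fewer than $n$ vertices, and deduce the statement for $(G,w)$ in the case $|F_{opt}| \ge \nicefrac{2n}{3\alpha}$. Since $n > 30\alpha$ we have $|L| = \nicefrac{n}{6\alpha} \ge 1$, so the recursive call defining $F_0$ is on the strictly smaller instance $(G-L, w')$ with $w' = \textsf{update1}(w, L)$, and the inductive hypothesis applies to it.

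First I would dispatch two easy structural points. (i) Writing $F_0 = L \cup F$ with $F$ the output of the recursive call, $F_0$ is an FVS of $G$: indeed $G - F_0 = (G-L) - F$, which is acyclic because $F$ is an FVS of $G-L$. (ii) $w'$ is a legitimate nonnegative weight function: if $r$ is the heaviest vertex of $L$, then, $L$ being a set of $\nicefrac{n}{6\alpha}$ lightest vertices of $G$, every $v \notin L$ satisfies $w(v) \ge w(r)$, so $w'(v) = w(v) - w(r) \ge 0$.

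The heart is the weight bookkeeping. Condition on the event, of probability $\ge \nicefrac12$, that $F$ is a $2\alpha$-approximate FVS of $(G-L, w')$, i.e.\ $w'(F) \le 2\alpha\cdot OPT'$ where $OPT'$ is the optimal FVS weight of $(G-L,w')$. By Observation~\ref{obs1}, $F_{opt}\setminus L$ is an FVS of $G-L$, hence $OPT' \le w'(F_{opt}\setminus L) = w(F_{opt}\setminus L) - |F_{opt}\setminus L|\cdot w(r)$. Undoing the shift on $F$ gives $w(F) = w'(F) + |F|\cdot w(r)$, and $w(L) \le |L|\cdot w(r) = \tfrac{n}{6\alpha}\,w(r)$. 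Assembling these and using $w(F_{opt}\setminus L)\le w(F_{opt})$,
\[
w(F_0) = w(L) + w(F) \;\le\; 2\alpha\, w(F_{opt}) \;+\; w(r)\Big(|F| + \tfrac{n}{6\alpha} - 2\alpha\,|F_{opt}\setminus L|\Big).
\]
It then remains to check the bracket is $\le 0$, which is exactly where the case hypothesis $|F_{opt}| \ge \nicefrac{2n}{3\alpha}$ is spent: $|F_{opt}\setminus L| \ge |F_{opt}| - |L| \ge \tfrac{2n}{3\alpha} - \tfrac{n}{6\alpha} = \tfrac{n}{2\alpha}$, so $2\alpha|F_{opt}\setminus L| \ge n$, while $|F| \le |V(G)\setminus L| = n - \tfrac{n}{6\alpha}$ gives $|F| + \tfrac{n}{6\alpha} \le n$. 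Since $w(r)\ge 0$, we conclude $w(F_0) \le 2\alpha\,w(F_{opt})$ on this event, which has probability at least $\nicefrac12$.

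The only real obstacle is this final cancellation: undoing $\textsf{update1}$ costs $|F|\cdot w(r)$, and $|F|$ can be almost $n$, so the argument closes only because $F_{opt}$ is large enough (here $\ge \nicefrac{2n}{3\alpha}$) that the matching saving $2\alpha|F_{opt}\setminus L|\cdot w(r)$ coming from the shift applied to $F_{opt}\setminus L$ dominates. This is precisely why the algorithm splits into Case~1 and Case~2 at the threshold $\nicefrac{2n}{3\alpha}$, and why $|L|$ is chosen to be $\nicefrac{n}{6\alpha}$. (If $\nicefrac{n}{6\alpha}\notin\mathbb{N}$ one reads $|L| = \lfloor \nicefrac{n}{6\alpha}\rfloor$; every inequality above remains valid.)
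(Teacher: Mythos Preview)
Your proof is correct and follows essentially the same approach as the paper: apply the inductive hypothesis to the recursive call on $(G-L,w')$, compare against the feasible solution $F_{opt}\setminus L$, undo the uniform shift by $w(r)$, and close using $|F_{opt}\setminus L|\ge \nicefrac{n}{2\alpha}$ together with $|F|+|L|\le n$. The only difference is cosmetic---you collect the residual terms into a single bracket and argue it is nonpositive, whereas the paper runs the same chain of inequalities linearly---and you are slightly more explicit about $w'\ge 0$ and about $F_0$ being an FVS.
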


\begin{proof}
Let $w'$ denote the weight function returned by $\textsf{update1}(w,L)$, \Cref{def:update}. Let $F'$ denote the set returned by the recursive call $\textsf{FindFVS}(G - L, w')$. Let $v$ denote the heaviest vertex in $L$. By applying the induction hypothesis on $G -L$, we have that with probability at least $\nicefrac12$, $F'$ is a $2\alpha$-approximate FVS in $(G -L,w')$. Therefore, we have $w(F_0) = w(F' \cup L) \le 2 \alpha\cdot w(F_{opt})$.
\qed 
\end{proof}





Thus, from now on we assume that $|F_{opt}| < \nicefrac{2n}{3\alpha}$. 
Next, we will analyze the probabilistic events in our algorithm, by which we will obtain a lower bound on the 
algorithm accuracy.

Consider the ordering $\sigma(G - F_{opt}) = \langle v_1, \dots, v_{n-|F_{opt}|} \rangle$ of vertices in $G - F_{opt}$. For each $i \in [28\alpha]$, we say that the randomly chosen vertex $x_i$ is \textit{good} if $x_i \not \in F_{opt}$ and the position of $x_i$ in $\sigma(G-F_{opt})$ is 
in the first $\nicefrac{(n-|F_{opt}|)}{3}$ vertices. Let $E_i^1$ denote the event that $x_i$ is good. Thus, for each $i$, $E_i^1$ occurs with probability at least $\left(\frac{(n-|F_{opt}|)}{3}\right)/n \ge \frac13-\frac{2}{9\alpha}$. The underlying goal of this definition is to bound the size of the recursive subproblems. That is, if $x_i$ is good for some $i\in [28\alpha]$, then the size of the subproblem in the $i^{th}$ iteration of the for loop is 
bounded, 
established via \cref{eq:bounding-problem-size}. Towards this, we first show the following.


\begin{newclaim}[$\dagger$]
\label[newclaim]{clm:degboundgoodver}
If $x_i$ is good, then both $d^+_G(x_i)$ and $d^-_G(x_i)$ are at least $\frac{n}{18\alpha}+\frac1{4\alpha}-\frac12$. 
\end{newclaim}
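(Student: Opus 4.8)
The plan is to unwind the definition of ``good'' for $x_i$ and feed it into the degree guarantee that is built into the HL-degree ordering. Write $m := n - |F_{opt}|$, so that $\sigma(G - F_{opt}) = \langle v_1, \dots, v_m \rangle$, and let $j$ be the position of $x_i$ in this ordering, i.e. $x_i = v_j$. By definition, $x_i$ being good means $x_i \notin F_{opt}$ (so that $x_i$ actually appears in $\sigma(G - F_{opt})$) and $j \le m/3$; in particular $j - 1 \le m/3 - 1$.

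First I would invoke the defining property of the HL-degree ordering: since $\sigma(G - F_{opt})$ is obtained by the recursive application of \Cref{clm:hiinhioutver} to $G - F_{opt}$ (which has $m$ vertices and independence number at most $\alpha$), both $d^{+}_{(G - F_{opt}) - \{v_1, \dots, v_{j-1}\}}(v_j)$ and $d^{-}_{(G - F_{opt}) - \{v_1, \dots, v_{j-1}\}}(v_j)$ are at least $\frac{m - (j-1) - 2\alpha}{4\alpha}$. Since $(G - F_{opt}) - \{v_1, \dots, v_{j-1}\}$ is the subgraph of $G$ induced on a vertex set containing $x_i$, deleting vertices can only remove in- and out-neighbors of $x_i$; hence $d^{+}_G(x_i)$ and $d^{-}_G(x_i)$ are each at least $\frac{m - (j-1) - 2\alpha}{4\alpha}$ as well.

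It then remains to lower-bound this expression. Plugging in $j - 1 \le m/3 - 1$ gives $m - (j-1) \ge \tfrac{2m}{3} + 1$, so $\frac{m - (j-1) - 2\alpha}{4\alpha} \ge \frac{2m/3 + 1 - 2\alpha}{4\alpha} = \frac{m}{6\alpha} + \frac{1}{4\alpha} - \frac{1}{2}$. Finally, since we are in the case $|F_{opt}| < \nicefrac{2n}{3\alpha}$ and $\alpha \ge 1$, we get $m = n - |F_{opt}| > n\bigl(1 - \tfrac{2}{3\alpha}\bigr) \ge \tfrac{n}{3}$, whence $\tfrac{m}{6\alpha} > \tfrac{n}{18\alpha}$. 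Combining the two displays yields $d^{+}_G(x_i), d^{-}_G(x_i) > \frac{n}{18\alpha} + \frac{1}{4\alpha} - \frac{1}{2}$, which proves the claim.

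I do not expect a genuine obstacle here; the argument is a short chain of inequalities. The only care needed is in the bookkeeping: one must apply the HL-degree guarantee to $\sigma(G - F_{opt})$, so the ``$n$'' appearing in \Cref{clm:hiinhioutver} is replaced by $m = n - |F_{opt}|$, and one should note explicitly that passing from a degree computed inside a vertex-deleted subgraph of $G - F_{opt}$ to the degree in all of $G$ only increases it. It is also worth noting the harmless technicality that $m/3$ need not be an integer (the bound $j - 1 \le m/3 - 1$ follows from $j \le m/3$ regardless), and that the case hypothesis $|F_{opt}| < \nicefrac{2n}{3\alpha}$ actually yields the final inequality strictly, which in particular gives the non-strict bound in the statement.
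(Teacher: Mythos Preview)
Your proof is correct and follows essentially the same approach as the paper: both arguments invoke the HL-degree ordering guarantee from \Cref{clm:hiinhioutver} on $G - F_{opt}$ at the worst position $j \le m/3$, observe that degrees in $G$ can only be larger, and then push through the case assumption $|F_{opt}| < \nicefrac{2n}{3\alpha}$ together with $\alpha \ge 1$. Your bookkeeping (introducing $m = n - |F_{opt}|$ and bounding $m/(6\alpha) > n/(18\alpha)$ directly) is in fact a bit cleaner than the paper's intermediate expression $\tfrac{n}{6\alpha} - \tfrac{n}{9\alpha^2}$, but the substance is identical.
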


Thus, if either $d^+_G(x_i)$ or $d^-_G(x_i)$ is strictly less than $\frac{n}{18\alpha}+\frac1{4\alpha}-\frac12$, we conclude that $x_i$ is not good. Then we set $F_i=V(G)$ and continue to find the next FVS (lines 8-11).

\noindent\textbf{Approximation factor analysis.} For a fixed $i \in [28 \alpha]$, suppose that $x_i$ is good (i.e., we condition on the event $E_i^1$). Let $G_1$ and $G_2$ denote $G[R_{G - C_i}(x_i)]$ and $G[\overline{R}_{G - C_i}(x_i)]$, resp. Let $F_i^+$ an $F_i^-$ denote the FVSs returned by the recursive calls \textsf{FindFVS}($G[R_{G - C_i}(x_i)], w_i$) and \textsf{FindFVS}($G[\overline{R}_{G - C_i}(x_i)], w_i$), resp.


Observe that all cycles in $G - C_i$ are contained completely inside either $G_1$ or $G_2$. Thus, $F_i^+ \cup F_i^-$ is an FVS in $G-C_i$ and $F_i = C_i \cup F^+_i \cup F^-_i$ is an FVS in $G$. Consequently, each set in $\{F_i\}_{i=0}^{28\alpha}$ is an FVS in $G$ and the algorithm always returns an FVS in $G$. Moreover, $F_{opt} \cap V(G_1)$ and $F_{opt} \cap V(G_2)$ are FVSs in $G_1$ and $G_2$ resp., by \Cref{lem:divide-conquer-FVS}. Thus, $F_{opt} \setminus C_i = F_{opt} \cap (V(G_1) \cup V(G_2))$ is an FVS in $G_1 \cup G_2 = G - C_i$, by \Cref{obs1}.

Let $E_i^2$ and $E_i^3$ denote the events that $F_i^+$ and $F_i^-$ are $2\alpha$-approximate FVS in $(G_1, w_i)$ and $(G_2,w_i)$, resp. By applying the induction hypothesis on $G_1$ and $G_2$, we have that each of $E_i^2$ and $E_i^3$ happens individually
with probability at least $\nicefrac12$. Since $E_i^2$ and $E_i^3$ are independent, both $E_i^2$ and $E_i^3$ happen with probability at least $\nicefrac12 \cdot \nicefrac12 = \nicefrac14$.

Suppose that $F_i^+$ and $F_i^-$ are $2\alpha$-approximate FVSs in $(G_1, w_i)$ and $(G_2,w_i)$, resp. Therefore, $F^+_i \cup F^-_i$ is a $2\alpha$-approximate FVS in $(G_1 \cup G_2,w_i)$. Consequently, we have that $w_i(F^+_i \cup F^-_i) \le 2 \alpha\cdot w_i(F_{opt} \setminus C_i)$,
since $F_{opt} \setminus C_i$ is an FVS in $G - C_i$, by \Cref{obs1}. From now on we condition on the events $E_i^1$, $E_i^2$, and $E_i^3$ and then prove the following. 

\begin{newclaim}[$\dagger$]
\label[newclaim]{clm:smalloptapproxfactor}
The set $F_i = C_i \cup F_i^+ \cup F_i^-$ is a $2\alpha$-approximate \FVS in $(G,w)$.
\end{newclaim}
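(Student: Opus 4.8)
\textbf{Proof proposal for \Cref{clm:smalloptapproxfactor}.}
The plan is to track how the weight function evolves through the \texttt{while} loop that builds $C_i$, and then combine this with the approximation guarantee we have already assumed on the two recursive subproblems. Let me index the iterations of the \texttt{while} loop by $j = 1, \dots, t$, writing $w_i^{(0)} = w$ for the initial weight function, $w_i^{(j)}$ for the weight function after the $j$-th call to \textsf{update2}, and $C'_j \setminus \{x_i\}$ for the set whose lightest vertex $v_j$ is added to $C_i$ in iteration $j$; thus $C_i = \{v_1, \dots, v_t\}$ and $w_i = w_i^{(t)}$. The key local-ratio observation is that for any subset of vertices $Z$ that is an FVS in $G - C_i$ (in particular $Z = F_{opt} \setminus C_i$, and also $Z = F_i^+ \cup F_i^-$), and for each $j$, we have $w_i^{(j-1)}(Z) = w_i^{(j)}(Z) + w_i^{(j-1)}(v_j)\cdot |Z \cap (C'_j \setminus \{x_i\})|$, since \textsf{update2} lowers the weight of every vertex of $C'_j \setminus \{x_i\}$ by exactly $w_i^{(j-1)}(v_j)$ and leaves all other weights unchanged.

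First I would establish the upper bound on $w(F_i)$. Since $F_i = C_i \cup F_i^+ \cup F_i^-$, we have $w(F_i) \le w(C_i) + w(F_i^+ \cup F_i^-)$. For the first term, note that $v_j$ is chosen as the lightest vertex of $C'_j \setminus \{x_i\}$ according to $w_i^{(j-1)}$, so $w_i^{(j-1)}(v_j) \le w_i^{(j-1)}(u)$ for every $u \in C'_j \setminus \{x_i\}$; moreover one checks inductively that all weights stay nonnegative throughout the loop (each \textsf{update2} step subtracts the current minimum over the set being updated), so $w_i^{(j-1)}(v_j) = w(v_j)$ is \emph{not} generally true --- instead I will bound $w(C_i)$ by telescoping. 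For the second term, I apply the conditioning on $E_i^2, E_i^3$: as already argued in the text, $w_i(F_i^+ \cup F_i^-) \le 2\alpha \cdot w_i(F_{opt}\setminus C_i)$.

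The heart of the argument is to ``undo'' the weight updates. Using the displayed identity above, telescoping over $j = t, t-1, \dots, 1$ gives, for any FVS $Z$ of $G - C_i$,
\begin{equation*}
w(Z) = w_i(Z) + \sum_{j=1}^{t} w_i^{(j-1)}(v_j)\cdot |Z \cap (C'_j \setminus \{x_i\})|.
\end{equation*}
Apply this once with $Z = F_i^+ \cup F_i^-$ and once with $Z = F_{opt}\setminus C_i$. For the optimal solution, since $x_i \notin F_{opt}$ and $C'_j$ is a cycle, $F_{opt}$ hits $C'_j$ inside $C'_j \setminus \{x_i\}$, so $|(F_{opt}\setminus C_i) \cap (C'_j \setminus \{x_i\})| \ge 1$ for every $j$ (here one must check that the vertices of $C_i$ removed in earlier iterations do not all of $F_{opt}$'s hitting vertices --- but $C'_j$ is a cycle \emph{in $G - C_i$ at iteration $j$}, i.e.\ disjoint from $\{v_1,\dots,v_{j-1}\}$, and it is still a cycle in $G$, so $F_{opt}$ must hit it outside $x_i$, giving the bound). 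Hence
\begin{equation*}
w(C_i) = \sum_{j=1}^t w_i^{(j-1)}(v_j) \le \sum_{j=1}^t w_i^{(j-1)}(v_j)\cdot |(F_{opt}\setminus C_i)\cap(C'_j\setminus\{x_i\})| = w(F_{opt}\setminus C_i) - w_i(F_{opt}\setminus C_i).
\end{equation*}
Putting the pieces together:
\begin{align*}
w(F_i) &\le w(C_i) + w(F_i^+\cup F_i^-) \\
&= w(C_i) + w_i(F_i^+ \cup F_i^-) + \sum_{j=1}^t w_i^{(j-1)}(v_j)\cdot|(F_i^+\cup F_i^-)\cap(C'_j\setminus\{x_i\})|.
\end{align*}
Here I would bound $|(F_i^+\cup F_i^-)\cap(C'_j\setminus\{x_i\})| \le 2\alpha$ using $|C'_j \setminus \{x_i\}| \le 2\alpha$ from \Cref{obs:shortinducedcycles}, so the sum is at most $2\alpha \cdot w(C_i)$; combined with $w_i(F_i^+\cup F_i^-)\le 2\alpha\, w_i(F_{opt}\setminus C_i)$ and the bound on $w(C_i)$ above, everything collapses to $w(F_i) \le 2\alpha\big(w(F_{opt}\setminus C_i)\big) \le 2\alpha\, w(F_{opt})$, using $C_i \cap F_{opt} = \emptyset$ (which holds because each $v_j$ lies on a cycle avoided... actually $C_i$ need not be disjoint from $F_{opt}$; but $w(F_{opt}\setminus C_i)\le w(F_{opt})$ regardless, which is all we need).

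\textbf{Main obstacle.} The delicate point is getting the bookkeeping of the weight updates exactly right: \textsf{update2} decreases weights of a whole set but the ``credit'' $w_i^{(j-1)}(v_j)$ is measured at the moment of the update, and one must be careful that the same vertex can appear in several of the sets $C'_j \setminus \{x_i\}$ across iterations, so the telescoping must be done on a fixed target set $Z$ rather than vertex-by-vertex. I also need the inductive invariant that all intermediate weights $w_i^{(j)}$ remain nonnegative (otherwise the local-ratio inequalities and the applicability of the induction hypothesis on $(G_1,w_i),(G_2,w_i)$ both break); this follows because at each step we subtract the minimum weight over exactly the set being decreased. Everything else --- the $\le 2\alpha$ cardinality bound on $C'_j\setminus\{x_i\}$, the fact that $F_{opt}$ hits each $C'_j$ away from $x_i$, and the already-established approximation bound on the recursive calls --- is routine.
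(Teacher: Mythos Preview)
Your telescoping formulation of the local-ratio argument is the right idea --- unwound, it is exactly the paper's backward induction on the loop index $j$, which proves $w_i^{(j-1)}(F_i) \le 2\alpha\, w_i^{(j-1)}(F_{opt})$ for $j=\ell+1,\ell,\dots,1$. But two of your displayed identities are false, and the argument as written does not close. First, $w(C_i) = \sum_{j} w_i^{(j-1)}(v_j)$ is wrong: your own telescoping formula with $Z=C_i$ gives $w(C_i)=\sum_j w_i^{(j-1)}(v_j)\cdot|C_i\cap(C'_j\setminus\{x_i\})|$, and nothing stops some later $v_k$ (with $k>j$) from also lying in $C'_j$, so the intersection can exceed $1$. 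Second, $|(F_{opt}\setminus C_i)\cap(C'_j\setminus\{x_i\})|\ge 1$ is not justified: $F_{opt}$ does hit $C'_j$ away from $x_i$, but the hitting vertex may itself be some $v_k\in C_i$ with $k\ge j$, in which case it disappears from $F_{opt}\setminus C_i$. Even if both claims held, your bound $|(F_i^+\cup F_i^-)\cap(C'_j\setminus\{x_i\})|\le 2\alpha$ only collapses to a $(2\alpha+1)$ factor, not $2\alpha$; you would need $\le 2\alpha-1$ there (which is actually true, since $v_j\in C'_j\setminus\{x_i\}$ but $v_j\notin F_i^+\cup F_i^-$).

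The clean fix is not to split $F_i$ into $C_i$ and $F_i^+\cup F_i^-$ at all: apply the telescoping identity directly to $F_i$ and to $F_{opt}$ (not $F_{opt}\setminus C_i$). Writing $\delta_j:=w_i^{(j-1)}(v_j)$, one gets $|F_i\cap(C'_j\setminus\{x_i\})|\le 2\alpha$ and $|F_{opt}\cap(C'_j\setminus\{x_i\})|\ge 1$ with no subtraction of $C_i$ needed, and since $w_i(C_i)=0$ we have $w_i(F_i)=w_i(F_i^+\cup F_i^-)\le 2\alpha\, w_i(F_{opt})$; hence
\[
w(F_i)\;\le\; w_i(F_i)+2\alpha\sum_j\delta_j\;\le\;2\alpha\,w_i(F_{opt})+2\alpha\bigl(w(F_{opt})-w_i(F_{opt})\bigr)\;=\;2\alpha\,w(F_{opt}).
\]
This is precisely the paper's inductive proof, summed rather than stepped.
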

\begin{proof}
Since 
$F_i$ is an FVS in $G$, it suffices to show that $w(C_i \cup F^+_i \cup F^-_i) \le 2\alpha\cdot w(F_{opt})$. 

Suppose that $C_i$ contains $\ell$ vertices at the end of the while loop (lines 14-20). Then, $w_i$, which was initially $w$, was updated $\ell$ times using the method $\textsf{update2}$. 
Let $w_i^0=w$. For each $j \in [\ell]$, let $w_i^j$ be the function $w_i$ after $j$ updates and let $v_j$ denote the $j^{th}$ vertex added to $C_i$.

We will prove a more general condition, which implies the claim, that for each $j \in [\ell]$ 
\begin{equation}
w_i^{j-1}(F^+_i \cup F^-_i \cup\{v_j, \dots, v_{\ell}\} \cup \{v_1, \dots, v_{j-1}\})
\le 2 \alpha\cdot w_i^{j-1}(F_{opt}).
\label{iterative-weight-update}
\end{equation}
Note that for a fixed $i$ and $j=1$, we have $w_i^0=w$ and $C_i=\{v_1, \ldots, v_{\ell}\}$ and so the above condition yields $w(F_i^+ \cup F_i^- \cup C_i) \leq 2\alpha\cdot w(F_{opt})$. 
Observe that by the definition of \textsf{update2}, 
\begin{equation}
w_i^{j-1}(v_k)=0 \textnormal{ for all } k \in [j-1].
\label{eqn:weight_function}
\end{equation}
Then, \Cref{iterative-weight-update} is equivalent to the following: 
\[
w_i^{j-1}(F^+_i \cup F^-_i \cup\{v_j, \dots, v_{\ell}\}) \le 2 \alpha\cdot w_i^{j-1}(F_{opt}\setminus\{v_1, \dots, v_{j-1}\}). \tag{by \cref{eqn:weight_function}}
\]

Our proof will use induction on the value of $j$, in decreasing order. 
For the base case $j=\ell+1$ (in which case $w_i^{j-1}=w_i^{\ell}=w_i$), we have 
\begin{align*}
w_i^{\ell}(F^+_i \cup F^-_i  \cup \{v_1, \dots, v_{\ell}\}) = w_i(F^{+}_{i} \cup F^-_i) &\le 2 \alpha\cdot w^{\ell}_i(F_{opt} \setminus C_i) \tag{by \cref{eqn:weight_function}}.
\end{align*}
We provide a proof of the inductive case in the full version of the paper.
This concludes the proof of the claim. \qed
\end{proof}

We will conclude the proof of the theorem by showing that our algorithm succeeds with bounded probability within time $\bigoh(n^{122\alpha^2})$. 


\noindent\textbf{Probability analysis.}
We have conditioned upon three events: \begin{enumerate*} \item[$(E_i^1)$] $x_i$ is good, \item[$(E_i^2)$] $F^+_i$ is a $2\alpha$-approximate FVS in $(G_1,w_i)$, and \item[$(E_i^3)$] $F^-_i$ is a $2\alpha$-approximate FVS in $(G_2,w_i)$. \end{enumerate*} For a fixed $i$, these three events happen with probability at least \begin{equation}
\label{eqn:probability_lb}
(\nicefrac13 - \nicefrac2{9\alpha})\cdot\nicefrac12\cdot\nicefrac12 = \nicefrac1{12} - \nicefrac1{18\alpha}.   
\end{equation}
The probability that  for each $i \in [28 \alpha]$ at least one of $\{E_i^1,E_i^2,E_i^3\}$ does not happen is at most $(\nicefrac{11}{12}+\nicefrac1{18\alpha})^{28 \alpha} \le \nicefrac12$ because $\alpha \ge 1$. Thus, with probability at least $\nicefrac12$ there exists $i \in [28\alpha]$ such that all the three events occur and consequently $F^+_i \cup F^-_i \cup C_i$ is a $2\alpha$-approximate FVS in $(G,w)$.


\noindent\textbf{Running time analysis.}
If $n \le 30 \alpha$, then the algorithm runs in $\bigoh(2^{30 \alpha}n^2)$ time. 
From now on, consider the case $n > 30 \alpha$. Each iteration of the while loop (lines 14-20) can be done in $\bigoh(n^3)$ time since finding a shortest cycle $C$ (line 15), a shortest induced cycle $C' \subseteq C$ (line 16), and a lightest vertex in $C'$ (line 17) can all be done in $\bigoh(n^3)$ time. 



Since in each iteration, a vertex $v \in G \setminus C_i$ is added to $C_i$ (which was initially empty), the repeat loop is carried out for at most $n$ steps. Therefore, the repeat loop can be done in time $\bigoh(n^4)$. Before we consider recursive calls, we would like to note that finding the $\nicefrac{n}{6\alpha}$ lightest vertices in $G$ (line 4) can be done in $\bigoh(n \log n)$ time and finding a lightest set (line 23) can be done in $\bigoh(n\alpha)$ time.

If $x_i$ is not good, then $F_i$ is set to $V(G)$ and no further recursive calls are made. Recall that by \Cref{clm:degboundgoodver}, if $x_i$ is good, both $d^+_G(x_i)$ and $d^-_G(x_i)$ are at least $\nicefrac{n}{18\alpha}+\nicefrac1{4\alpha}-\nicefrac12 $. 
Since $G - C_i$ does not contain any cycle that $x_i$ is part of, the number of vertices in $G[R_{G - C_i}(x_i))]$ and $G[\overline{R}_{G - C_i}(x_i)]$ is at most $n-(\nicefrac{n}{18\alpha}+\nicefrac1{4\alpha}-\nicefrac12)$. Upon simplification, we note that
\begin{align}
\label{eq:bounding-problem-size}
n-(\nicefrac{n}{18\alpha}+\nicefrac1{4\alpha}-\nicefrac12) \le n(1-\nicefrac{1}{18\alpha}) +\nicefrac12 \le n(1-\nicefrac{1}{30\alpha}),
\end{align}
where the last inequality follows from the assumption that $n>30\alpha$. 

Thus, the overall running time is given by an application of the Master theorem \cite{cormen2022introduction} to the recurrence relation
\begin{align}
\label{eq:fvsbin_rr}
T(n) \le 2 \cdot 28 \alpha T(n(1-\nicefrac{1}{30\alpha})) + 28\alpha\bigoh(n^4)+T(n(1-\nicefrac{1}{6\alpha})) = \bigoh(n^{122\alpha^2}).
\end{align}
This concludes the proof of the theorem.
\section{Subset FVS in Tournaments}



In addition to a tournament $T$ on $n$ vertices and a weight function $w: V(T) \rightarrow \mathbb{N}_{\geq 0}$, we are given as input a vertex subset $S \subseteq V(T)$ of size $s$. We say that $F\subseteq V(T)$ is a \textit{subset feedback vertex set} (\SFVS, in short) in $T$ if there is no cycle containing vertices of $S$ in $T - F$. The goal is to find a minimum weight \SFVS in $T$. Observe that if $S=V(T)$, then the problem is a case of \DFVSbIN with $\alpha=1$. 






By $(T,S,w)$, we denote an instance of \sfvst.
The following observations follow from the hereditary property of subset-acyclicity.

\begin{observation}\label{obs3}
    Let $F$ be an \SFVS in $(T,S,w)$ and let $X\subseteq V(G)$. Then $F\setminus X$ is an \SFVS in $(T-X,S\setminus X,w)$.
\end{observation}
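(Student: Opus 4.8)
The statement to prove is Observation~\ref{obs3}: if $F$ is an \SFVS in $(T,S,w)$ and $X \subseteq V(G)$, then $F \setminus X$ is an \SFVS in $(T-X, S\setminus X, w)$.

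\textbf{Plan of proof.} The plan is to argue directly from the definition of \SFVS together with the fact that taking induced subgraphs cannot create new cycles. Recall that $F$ being an \SFVS in $(T,S,w)$ means $T - F$ has no directed cycle passing through a vertex of $S$. I want to show $(T - X) - (F \setminus X)$ has no directed cycle passing through a vertex of $S \setminus X$. The key observation is the identity $(T - X) - (F \setminus X) = T - (X \cup (F \setminus X)) = T - (X \cup F)$, which is an induced subgraph of $T - F$.

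\textbf{Key steps, in order.} First I would note that $(T-X)-(F\setminus X)$ is exactly the digraph $T - (X \cup F)$, since deleting $X$ and then deleting $F\setminus X$ removes precisely the vertices in $X \cup F$. Second, since $T-(X\cup F)$ is an induced subgraph of $T - F$, every directed cycle in $T-(X\cup F)$ is also a directed cycle in $T-F$. Third, suppose for contradiction there were a directed cycle $C$ in $(T-X)-(F\setminus X) = T-(X\cup F)$ containing a vertex $s \in S \setminus X$. Then $C$ is a cycle in $T-F$, and $s \in S$, so $C$ is a cycle in $T-F$ through a vertex of $S$ — contradicting that $F$ is an \SFVS in $(T,S,w)$. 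Hence no such cycle exists, and $F\setminus X$ is an \SFVS in $(T-X, S\setminus X, w)$. (The weight function $w$ plays no role beyond being restricted to the smaller vertex set, as already stipulated in the preliminaries.)

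\textbf{Anticipated obstacle.} There is essentially no hard step here — this is a routine hereditary-property argument, completely analogous to Observations~\ref{obs1} and \ref{obs3}'s FVS counterparts. The only point requiring a modicum of care is bookkeeping the terminal set: one must check that the witness vertex $s$ lies in $S \setminus X$ (not merely in $S$), which is automatic because $s$ is a vertex of $T - X$ and hence not in $X$, while $s \in S$ by assumption. So the "main obstacle" is merely being precise that deleting vertices never creates cycles and that the surviving terminals are exactly $S \setminus X$.
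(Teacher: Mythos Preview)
Your proof is correct and is exactly the hereditary-property argument the paper alludes to when it says these observations ``follow from the hereditary property of subset-acyclicity''; the paper does not spell out a proof beyond that remark. Your identification of $(T-X)-(F\setminus X)=T-(X\cup F)$ as an induced subgraph of $T-F$ is the whole content, and your bookkeeping on the terminal set is the only point worth making explicit.
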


\begin{observation}\label{obs4}
Suppose that $F$ is an optimal \SFVS in $(T,S,w)$ and $X$ is a subset of $F$. Then, $F \setminus X$ is an optimal \SFVS in $(T - X,S\setminus X,w)$, of weight $w(F)-w(X)$.
\end{observation}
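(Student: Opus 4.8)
The plan is to mirror the argument for \Cref{obs2}, taking \Cref{obs3} as the starting point and invoking the hereditary behaviour of $S$-relevant cycles (those passing through a vertex of $S$) under vertex deletion. First I would dispose of the weight claim: since $X \subseteq F$, the sets $F \setminus X$ and $X$ partition $F$, so $w(F \setminus X) = w(F) - w(X)$. By \Cref{obs3}, $F \setminus X$ is already an \SFVS in $(T - X, S \setminus X, w)$, so the only thing left to establish is that it is an \emph{optimal} one.

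For optimality I would argue by contradiction. Suppose some \SFVS $F'$ in $(T - X, S \setminus X, w)$ satisfies $w(F') < w(F \setminus X) = w(F) - w(X)$. The key step is to check that $F' \cup X$ is an \SFVS in $(T, S, w)$. Let $C$ be any cycle in $T$ through a vertex $s \in S$. If $C$ meets $X$, then $C$ is hit by $X \subseteq F' \cup X$. Otherwise $C$ lies entirely in $T - X$; since $s \notin X$ we have $s \in S \setminus X$, so $C$ is a cycle of $T - X$ through a vertex of $S \setminus X$, hence hit by $F'$. Either way $C$ is hit, so $F' \cup X$ is indeed an \SFVS in $(T, S, w)$. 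Then $w(F' \cup X) \le w(F') + w(X) < (w(F) - w(X)) + w(X) = w(F)$, contradicting the optimality of $F$. Hence no such $F'$ exists, and $F \setminus X$ is optimal in $(T - X, S \setminus X, w)$ with weight $w(F) - w(X)$.

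The argument is essentially bookkeeping, and I do not expect a genuine obstacle. The only point meriting a moment's care is the cycle-correspondence step in the displayed case analysis — verifying that an $S$-relevant cycle of $T$ that avoids $X$ is genuinely an $(S \setminus X)$-relevant cycle of $T - X$, so that $F'$ is obliged to hit it — which is precisely where the definition of \SFVS as ``hitting every cycle through $S$'' (rather than every cycle) is used.
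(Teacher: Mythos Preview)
Your argument is correct and is exactly the natural elaboration the paper has in mind: the paper states this as an observation ``following from the hereditary property of subset-acyclicity'' and gives no further proof, and your contradiction argument via $F' \cup X$ is the standard way to unpack that remark. Nothing to change.
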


In our discussions, a \textit{triangle} ($\triangle$, in short) is a directed cycle of length three. 
The following structural lemma gives us the fact that, hitting all triangles passing through the vertices of $S$ is equivalent to hitting all cycles passing through $S$. 
As a consequence, we have that $F \subseteq V(T)$ is an \SFVS if and only if in $T-F$ there is no triangle that contains a vertex of $S$.

\begin{lemma}[$\dagger$]
\label{prop:shortcycleinsfvst}
For a vertex $x\in S$, any shortest cycle containing $x$ is a $\triangle$.
\end{lemma}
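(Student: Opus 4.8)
The plan is to argue by contradiction via a shortcutting argument, exploiting the fact that $T$ is a tournament. Suppose $x \in S$ and let $C = \langle x = u_1, u_2, \dots, u_k, u_1 \rangle$ be a shortest cycle through $x$, and assume for contradiction that $k \ge 4$. The key observation is that since $T$ is a tournament, for the two non-consecutive vertices $u_1 = x$ and $u_3$ there is an arc between them, oriented either $(u_1, u_3)$ or $(u_3, u_1)$.

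First I would handle the case where the arc is $(u_1, u_3)$: then $\langle u_1, u_3, u_4, \dots, u_k, u_1 \rangle$ is a cycle of length $k-1$ that still contains $x = u_1$, contradicting the minimality of $C$. Next I would handle the case where the arc is $(u_3, u_1)$: then $\langle u_1, u_2, u_3, u_1 \rangle$ is a triangle, but it does not obviously contradict minimality since it has length $3 < k$ — wait, it does, because it is a cycle of length $3 < k$ that contains $x = u_1$. So in fact \emph{either} orientation of the chord $\{u_1, u_3\}$ immediately produces a shorter cycle through $x$, which is the contradiction. Hence $k = 3$, i.e.\ $C$ is a triangle.

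I expect the argument to be genuinely short; there is no real obstacle here beyond being careful that the shorter cycle produced still passes through the designated vertex $x$ (which it does in both cases, since $x = u_1$ is an endpoint of the chord we use). One should also note the trivial base cases $k = 1$ (no loops, as the digraph is simple) and $k = 2$ (no parallel/antiparallel arcs, again by simplicity), so that the shortest cycle through $x$ indeed has length at least $3$, and combined with the above it has length exactly $3$. The consequence stated in the text — that $F$ is an \SFVS iff $T - F$ has no triangle through a vertex of $S$ — then follows immediately: one direction is trivial (a triangle through $s \in S$ is a cycle through $s$), and the other follows because if $T - F$ had any cycle through some $s \in S$, applying the lemma to $T - F$ (which is still a tournament) gives a triangle through $s$ in $T - F$.
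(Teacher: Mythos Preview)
Your proof is correct and is essentially the same argument as the paper's: both look at the chord between $x$ and the vertex two steps along the cycle (your $u_3$, the paper's $a_2$) and observe that either orientation yields a strictly shorter cycle through $x$. The paper omits your remark about why $k\ge 3$, but otherwise the proofs coincide.
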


The following lemma, analogous to \Cref{lem:divide-conquer-FVS}, shows the interaction of an \SFVS $F$ in $(T,S,w)$ with $T[R_{T}(x)]$ and $T[\overline{R}_T(x)]$. 



\begin{lemma}[$\dagger$]
\label{lem:divide-conquer-SFVS}Suppose that $x \in S$ is a vertex that is not part of any cycle in $T$, then the following holds: $F$ is an \SFVS in $(T,S,w)$ if and only if $F\cap R_{T}(x)$ is an \SFVS in $(T[R_{T}(x)],S\cap R_{T}(x),w)$ and $F\cap \overline{R}_T(x)$ is an \SFVS in $(T[\overline{R}_T(x)],S\cap \overline{R}_T(x),w)$.
\end{lemma}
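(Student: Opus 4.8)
The plan is to mimic the proof of \Cref{lem:divide-conquer-FVS}, the only new ingredient being that the obstruction cycles must pass through $S$, so we must track terminals as well as vertices through the partition. Recall that since $x$ lies on no cycle of $T$, the triple $(R_T(x), \overline{R}_T(x), \{x\})$ partitions $V(T)$ and, crucially, no cycle of $T$ uses any arc from $R_T(x)$ to $\overline{R}_T(x)$ — otherwise such an arc together with a path through $x$ would put $x$ on a cycle (here one uses that $T$ is a tournament: the arc between any vertex of $R_T(x)$ and any vertex of $\overline{R}_T(x)$ cannot go the ``wrong'' way, else reachability would be violated; and an arc from $\overline{R}_T(x)$ into $R_T(x)$ closing a cycle would force $x$ onto a cycle). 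Hence \emph{every} cycle of $T$ — in particular every cycle through a vertex of $S$ — lies entirely inside $T[R_T(x)]$ or entirely inside $T[\overline{R}_T(x)]$. This is the structural fact I would isolate first as a sub-claim.

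For the forward direction, suppose $F$ is an \SFVS in $(T,S,w)$. Any cycle of $T[R_T(x)]$ through a vertex of $S\cap R_T(x)$ is a cycle of $T$ through a vertex of $S$, hence is hit by $F$; since it lives inside $R_T(x)$, it is hit by $F\cap R_T(x)$. Thus $F\cap R_T(x)$ is an \SFVS in $(T[R_T(x)], S\cap R_T(x), w)$, and symmetrically for $\overline{R}_T(x)$. For the reverse direction, suppose $F\cap R_T(x)$ and $F\cap \overline{R}_T(x)$ are \SFVSs in the two induced instances, and let $C$ be any cycle in $T-F$ containing a vertex of $S$. By the sub-claim, $C$ lies entirely in $R_T(x)$ or entirely in $\overline{R}_T(x)$; say the former. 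Then $C$ is a cycle in $T[R_T(x)] - (F\cap R_T(x))$ through a vertex of $S\cap R_T(x)$, contradicting that $F\cap R_T(x)$ is an \SFVS there. Hence no such $C$ exists and $F$ is an \SFVS in $(T,S,w)$.

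The main obstacle — really the only place requiring care — is the sub-claim that cycles do not cross the $(R_T(x),\overline{R}_T(x))$ cut, and in particular that there are no arcs from $R_T(x)$ to $\overline{R}_T(x)$ at all. One must argue: if $(u,v)\in E(T)$ with $u\in R_T(x)$, then $v$ is reachable from $x$ (via $x\to\dots\to u\to v$), so $v\in R_T(x)\cup\{x\}$; and $v=x$ would mean $x$ lies on the cycle closing the $x$--$u$ path with the arc $(u,x)$, contradicting that $x$ is on no cycle. So all arcs across the cut go from $\overline{R}_T(x)$ to $R_T(x)$, and a cycle using such an arc would have to return, using another cross arc in the forbidden direction — impossible. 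Everything else is a routine transfer of cycles between $T$ and its induced subgraphs, using that taking the induced subgraph on $R_T(x)$ (resp.\ $\overline{R}_T(x)$) neither creates nor destroys cycles relevant to $S$, exactly as in \Cref{obs3}.
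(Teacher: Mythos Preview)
Your proposal is correct and follows essentially the same approach as the paper's proof: both establish that no cycle (hence no $S$-cycle) can cross the $(R_T(x),\overline{R}_T(x))$ partition because $x$ lies on no cycle, then handle the forward direction by heredity and the backward direction by contradiction. Your version is in fact more carefully argued than the paper's terse appendix proof; the only superfluous remark is the appeal to $T$ being a tournament in the sub-claim, since the ``no arc from $R_T(x)$ to $\overline{R}_T(x)$'' fact (which you correctly derive in your final paragraph) holds in any digraph.
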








Next, we present a randomized $2$-approximation algorithm for \sfvst that runs in time $n^{\bigoh(1)}$ extending the ideas that we used to solve \DFVSbIN. 

The base case of the recursive algorithm is given by $s=|S|\le 30$. Unlike Algorithm \ref{alg:findfvs}, we cannot handle the base case by simply iterating over all subsets of vertices of size at most $30$ to find an optimal solution. An \SFVS may contain vertices outside $S$ (i.e., in $T-S$). To overcome this, we use the notion of vertex covers.
\hide{\ma{This definition comes out of nowhere! and then suddenly we revert to talking about base case.}}
A subset of vertices $B \subseteq V(T)$ is called a \textit{vertex cover} in $T$ if for each arc $(u,v)\in E(T)$ we have $B \cap \{u,v\} \ne \emptyset$. Towards handling the base case, we use the well known fact that a $2$-approximate minimum weight vertex cover in a digraph on $n$ vertices can be computed in $\bigoh(n^{2})$ time \cite{bar2004local} using the subroutine $\textsf{FindVertexCover}(T',w')$ which takes $T'=(V',E')$ as input together with a weight function $w:V(T')\rightarrow \mathbb{N}_{\geq 0}$.
From now on, by $\binom{S}{\le30}$, we denote the subsets of $S$ of size at most 30. 

\subsection{Technical Overview}
\begin{definition}\label{def:update_sfvst}
Given a set $S\subseteq V(T)$, we use two ``weight update" functions both of which take as input a weight function $w$ and a set of vertices $Q$ and return a new weight function:
\begin{enumerate}
    \item \textsf{update3}$(w,Q)$: Let $h$ be the heaviest vertex in $Q$. It returns $w': V(T) \setminus Q \rightarrow \mathbb{N}_{\geq 0}$ where 
    \[
    w'(v) = 
    \begin{cases}
        w(v)-w(h) & \textnormal{if } v \in S \setminus Q;\\
        w(v) & \textnormal{otherwise}.
    \end{cases}
    \]
    \item \textsf{update4}$(w,Q)$: Let $\ell$ be the lightest vertex in $Q$. It returns $w': V(G) \rightarrow \mathbb{N}_{\geq 0}$ where 
    \[
    w'(v) = 
    \begin{cases}
        w(v)-w(\ell) & \textnormal{if } v \in Q;\\
        w(v) & \textnormal{otherwise}.
    \end{cases}
    \]
\end{enumerate}
\end{definition}
The high level structure of the solution is similar to the one for \DFVSbIN. In this overview, we highlight the key differences from the previous algorithm. Let $F_{opt}$ be an optimal SFVS in $T$. We consider the following three cases.
\begin{itemize}[wide=0pt]
    \item{\textbf{Case 1 ($|S \cap F_{opt}| \le 30$):}} We ``guess" this intersection by iterating over all $Q \in \binom{S}{\le30}$, i.e., we guess the part of $S$ which is \textit{inside} (say $Q$) the solution and the part which is \textit{outside} (say $O$) the solution. After the guessing if we find any $\triangle$ containing only vertices from $O$, then we cannot extend $Q$ and trivially set $S$ to be the solution, denoted by $F_Q$. Otherwise, we initially set $F_Q$ to be the vertices which are inside the solution and extend it in two phases. In the first phase, we deal with $\triangle$s where two of its vertices are in $O$ and we add the third vertex of such a $\triangle$ to $F_Q$. In the second phase, we deal with $\triangle$s where one vertex is in $O$: we find a 2-approximate weighted vertex cover using $\textsf{FindVertexCover}()$ on the (undirected) graph containing the edges between the end vertices that are not in $O$, of such $\triangle$s.
    We show that the extension $F_Q$ corresponding to $Q = S \cap F_{opt}$ is a $2$-approximate solution (\Cref{clm:basecasesfvst}).
    As a base case of the recursive algorithm, if $s<30$ then we return the lightest solution in $\{F_Q\}_{Q \in \binom{S}{\le30}}$ (say $Y$).
    \item{\textbf{Case 2 ($|S \cap F_{opt}|\ge \nicefrac{2s}{3}$):}}
    Let $L$ be a set of $\nicefrac{s}{6}$ lightest vertices in $S$ and $F$ be the SFVS returned by the recursive call $\textsf{FindSFVS}(T-L, S\setminus L, \texttt{update3}(w,L))$. We show that $F_0=F \cup L$ is a 2-approximate SFVS in $(T,S,w)$ with probability at least $\nicefrac12$ (\Cref{clm:largeoptsfvst}).
    \item{\textbf{Case 3 ($30 < |S \cap F_{opt}| < \nicefrac{2s}{3}$):}} First, we randomly sample a vertex $x \in S$. With probability at least $\nicefrac13-\nicefrac29$, $x$ is in the first $\nicefrac13^{rd}$ part of $\sigma(T[S\setminus F_{opt}])$. Such a vertex $x$ has in-degree and out-degree at least $\nicefrac{s}{18}+\nicefrac14-\nicefrac12$ in $T[S\setminus F_{opt}]$ (\Cref{clm:degboundgoodversfvst}). We compute $P$ (and a weight function $w'$), a set of vertices such that in $T-P$ there is no cycle that contains $x$. Consequently, both $|S \cap R_{T-P}(x)|$ and $|S \cap \overline{R}_{T-P}|$ are at most $s(1-\nicefrac1{30})$. We recursively compute SFVSs $F_1$ and $F_2$ in $(T[R_{T-P}(x)], S \cap R_{T-P}(x), w')$ and $(T[\overline{R}_{T-P}], S \cap \overline{R}_{T-P}, w')$ resp. and obtain an 2-approximate SFVS in $T$: $P \cup F_1 \cup F_2$ (\Cref{clm:smalloptsfvst}). Let $F_1, \dots, F_{28}$ be the solutions that we get by repeating the random experiment 28 times.

    \begin{sloppypar}
    We show that with probability at least $\nicefrac12$, the lightest set among $\{Y, F_0,F_1, \dots, F_{28}\}$ according to $w$ is a 2-approximate SFVS in $T$.
    \end{sloppypar}
\end{itemize}
For the running time, we show that the number of subproblems is $s^{\bigoh(1)}$ and the time spent at each subproblem is $n^{\bigoh(1)}$. Thus, the overall running time is $n^{\bigoh(1)}$.
\subsection{The Algorithm}

The full version of the paper contains the pseudocode of the algorithm for SFVS in tournaments. 
\subsubsection{Analysis}
~\\
\noindent\textit{Proof of \Cref{thm:sfvst}.}
Let $F_{opt}$ denote an optimal SFVS in $T$. We will prove by induction on $n$. For the base case, we consider $|S\cap F_{opt}| \le 30$ (which subsumes the case $n \le 30$). 
\begin{newclaim}[$\dagger$]
\label[newclaim]{clm:basecasesfvst}
If $|S\cap F_{opt}| \le 30$, then 
$Y$ is a $2$-approximate SFVS in $(T,S,w)$.
\end{newclaim}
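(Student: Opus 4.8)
The plan is to fix $Q^\star = S \cap F_{opt}$ and show that the candidate solution $F_{Q^\star}$ built by the algorithm when it guesses $Q = Q^\star$ is both a valid SFVS and has weight at most $2\,w(F_{opt})$; since $Y$ is the lightest over all guesses $Q \in \binom{S}{\le 30}$ and $Q^\star$ is among them, this suffices. First I would record the structural consequence of \Cref{prop:shortcycleinsfvst}: $F$ is an SFVS iff $T-F$ has no triangle through a vertex of $S$. With $O = S \setminus Q^\star$ (the part of $S$ \emph{outside} the solution), note that every triangle of $T$ through a vertex of $S$ falls into exactly one of three types according to how many of its vertices lie in $O$: (i) at least one vertex in $Q^\star$ — already hit because $Q^\star \subseteq F_{Q^\star}$; (ii) exactly zero vertices in $O$ but — impossible, since a triangle through $S$ must meet $S = Q^\star \cup O$, so this folds into (i); the real split is (ii') all three vertices in $O$, (iii) exactly two vertices in $O$, (iv) exactly one vertex in $O$. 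Case (ii') cannot occur when $Q = Q^\star$: such a triangle would be a triangle through $S$ surviving in $T - F_{opt} \supseteq T - (V(T)\setminus O)$... more precisely $O = S\setminus F_{opt}$ and all three of its vertices avoid $F_{opt}$, contradicting that $F_{opt}$ is an SFVS — so the algorithm does not bail out to $F_Q := S$ on this guess.

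Next I would handle phases one and two. In phase one the algorithm adds, for each triangle with exactly two vertices in $O$, its third vertex to $F_{Q^\star}$; call the set of these third vertices $A$. Every such third vertex must lie in $F_{opt}$ (its two $O$-companions are outside $F_{opt}$ and $F_{opt}$ must hit the triangle), so $A \subseteq F_{opt}$ and in particular $w(A) \le w(F_{opt} \setminus Q^\star)$. Using $\textsf{update3}$ after phase one zeroes out (within $S$) the appropriate amount; I would verify that the standard local-ratio bookkeeping gives $w(Q^\star \cup A) \le 2\alpha\cdot(\text{relevant optimum})$ — here with $\alpha = 1$, but really it is the weaker bound $w(Q^\star)+w(A) \le w(F_{opt})$ that we track, paying the factor $2$ only in phase three. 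After phases one removes $A$, the only surviving triangles through $S$ are type (iv): exactly one vertex in $O$, the other two vertices lying in $V(T)\setminus O$ (they are not in $Q^\star$ either, else already hit — wait, they could be in $Q^\star$; but then already hit, so WLOG the two non-$O$ vertices are not in $F_{Q^\star}$ so far and must be covered now). For each such triangle put an (undirected) edge between its two non-$O$ endpoints; any SFVS disjoint from $O$ — in particular $F_{opt}$, which avoids $O = S\setminus F_{opt}$ — must contain at least one endpoint of every such edge, hence $F_{opt} \setminus (O \cup \text{already-hit})$ is a vertex cover of this auxiliary graph. Therefore $\textsf{FindVertexCover}$ returns a set $B$ with $w(B) \le 2\cdot w(F_{opt} \cap (\text{vertices of the auxiliary graph}))$.

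Finally I would assemble the weight bound: $F_{Q^\star} = Q^\star \cup A \cup B$, and after carefully disjointifying the optimum's contributions ($Q^\star$ paid at ratio $1$, $A$ paid at ratio $1$, $B$ paid at ratio $2$, with the three optimum-pieces pairwise disjoint inside $F_{opt}$), we get $w(F_{Q^\star}) \le w(Q^\star) + w(A) + 2\,w(B') \le 2\,w(F_{opt})$ where $B'$ is the portion of $F_{opt}$ charged in phase two; the local-ratio weight updates (\textsf{update3}) are exactly what make these charges additive across phases without double counting, which is the usual local-ratio telescoping argument. Validity of $F_{Q^\star}$ as an SFVS is immediate once phases one and two terminate: no triangle through $S$ survives, so by \Cref{prop:shortcycleinsfvst} no cycle through $S$ survives. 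I expect the main obstacle to be the bookkeeping in the second phase: one must argue that the vertex-cover instance correctly captures \emph{all} remaining type-(iv) triangles simultaneously (a single edge may come from several triangles, and a single triangle contributes one edge), and that the weight function passed to $\textsf{FindVertexCover}$ — namely $w$ after the phase-one $\textsf{update3}$ — still makes $F_{opt}$'s relevant slice a cover of the right weight, so that the factor-$2$ guarantee of $\textsf{FindVertexCover}$ composes cleanly with the factor-$1$ accounting of phases zero and one to yield the global factor $2$.
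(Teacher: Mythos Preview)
Your overall strategy is exactly the paper's: fix $Q^\star = S\cap F_{opt}$, rule out all-$O$ triangles, observe that the phase-one ``third vertices'' $A$ are forced into $F_{opt}$, reduce the remaining one-in-$O$ triangles to a vertex-cover instance whose optimum is at most $w(F_{opt}\cap V')$, and conclude by disjointness of $Q^\star$, $A$, and $V'$ inside $F_{opt}$. That part is fine and matches the paper's proof essentially line for line.

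There is, however, a factual confusion you should remove. The base-case branch of the algorithm (the loop over $Q\in\binom{S}{\le 30}$) performs \emph{no} weight updates: there is no call to \textsf{update3} here, and \textsf{FindVertexCover} is invoked with the original weight function $w$, not a reduced one. So the sentences about ``\textsf{update3} after phase one zeroes out\ldots'' and ``the local-ratio weight updates (\textsf{update3}) are exactly what make these charges additive'' are simply wrong about what the algorithm does. Fortunately your argument does not actually need any local-ratio telescoping: the additivity you want comes purely from the set-theoretic disjointness you already noted, namely $Q^\star\subseteq S$, $A\subseteq V(T)\setminus S$, and $V'\subseteq V(T)\setminus(S\cup A)$, which makes $Q^\star$, $A$, and $F_{opt}\cap V'$ pairwise disjoint subsets of $F_{opt}$. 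Drop the local-ratio language, state the disjointness explicitly, and the proof is clean and identical to the paper's.
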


From now on, we assume that $|S\cap F_{opt}|>30$. 
We will restate the key statements, claims and definitions. Proofs of the following claims are similar to that of the claims that we have proved for the \DFVSbIN problem. The only part of problem which need to be argued is the running time analysis, as in the \sfvst problem the base case is non-trivial. 

Now similar to before, we first consider the case $|S\cap F_{opt}| \ge \nicefrac{2s}{3}$. 
\begin{newclaim}[$\dagger$]
\label[newclaim]{clm:largeoptsfvst}
Suppose that $|S \cap F_{opt}| \ge \nicefrac{2s}{3}$. Then, with probability at least $\nicefrac12$, $F_0$ is a $2$-approximate SFVS in $(T,S,w)$.
\end{newclaim}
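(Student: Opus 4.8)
The plan is to mirror the argument of Claim \ref{clm:largeopt} for \DFVSbIN, adapting it to the subset setting via the weight-update function \textsf{update3} and the observations \ref{obs3}--\ref{obs4}. Recall that in Case 2 the algorithm sets $L$ to be a set of $\nicefrac{s}{6}$ lightest vertices of $S$, calls $\textsf{FindSFVS}(T-L, S\setminus L, w')$ with $w' = \textsf{update3}(w,L)$ to obtain an \SFVS $F$, and returns $F_0 = F\cup L$. Since $T-L$ has strictly fewer vertices than $T$ (as $L\neq\emptyset$ whenever $s>0$, which holds because $|S\cap F_{opt}|>30$), the induction hypothesis applies: with probability at least $\nicefrac12$, $F$ is a $2$-approximate \SFVS in $(T-L, S\setminus L, w')$. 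I would condition on this event for the rest of the argument, so the claimed probability bound $\nicefrac12$ follows immediately once correctness is shown.

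The correctness part has two pieces. First, $F_0 = F\cup L$ is an \SFVS in $(T,S,w)$: any cycle through a vertex of $S$ that survives in $T-F_0$ avoids $L$, hence lives in $T-L$ and passes through a vertex of $S\setminus L$, contradicting that $F$ is an \SFVS in $(T-L,S\setminus L,w')$. Second, the weight bound $w(F_0)\le 2\cdot w(F_{opt})$. Here I would use that $|S\cap F_{opt}|\ge \nicefrac{2s}{3} > \nicefrac{s}{6}$, so $F_{opt}$ contains at least $|L|$ vertices of $S$; since $L$ consists of the $\nicefrac{s}{6}$ lightest vertices of $S$, we get $w(L)\le w(S\cap F_{opt})\le w(F_{opt})$, in fact $w(L)$ is at most the weight of the $|L|$ lightest vertices among those in $S\cap F_{opt}$. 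Letting $h$ be the heaviest vertex of $L$, the function $w'$ subtracts $w(h)$ from every vertex of $S\setminus L$ and leaves the weights of vertices outside $S$ unchanged. The key comparison is: for the optimal \SFVS $F_{opt}'$ of the subinstance, $w'(F_{opt}') \le w'(F_{opt}\setminus L)$ since $F_{opt}\setminus L$ is an \SFVS in $(T-L,S\setminus L,w')$ by Observation \ref{obs3}; and unwinding $w'$, one bounds $w(F) = w'(F) + w(h)\cdot |F\cap (S\setminus L)| \le 2w'(F_{opt}') + w(h)\cdot|F\cap(S\setminus L)|$, then combine with $w(L)\le w(h)\cdot|L|$ together with $|L| = \nicefrac{s}{6}$ and a counting bound on $|F\cap(S\setminus L)|$ versus $|F_{opt}\cap(S\setminus L)|$ to telescope everything back to $w(F_0)\le 2w(F_{opt})$. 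This is the same "local-ratio bookkeeping" as in Claim \ref{clm:largeopt}, just carried out with the $S$-restricted weight shift.

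The main obstacle, as in the \DFVSbIN case, is the weight accounting: one must verify that subtracting $w(h)$ uniformly on $S\setminus L$ and then adding back the appropriate multiples does not break the factor-$2$ guarantee, which hinges on the fact that $|L| = \nicefrac{s}{6}\le |S\cap F_{opt}|$ so that at least $|L|$ "credits" of weight $w(h)$ are available inside $F_{opt}$ to pay for $L$. Everything else (the structural \SFVS argument and the probability bound) is routine given Observations \ref{obs3}--\ref{obs4} and the induction hypothesis. For the running time remark: since Case 2 makes a single recursive call on an instance with $s' = \nicefrac{5s}{6}$ terminals (and $n' < n$ vertices), it contributes a factor that is absorbed into the overall $n^{\bigoh(1)}$ bound established for the full algorithm, so no separate analysis is needed here.
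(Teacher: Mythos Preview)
Your plan is correct and follows essentially the same route as the paper: apply the induction hypothesis to the recursive call on $(T-L,S\setminus L,w')$, use Observation~\ref{obs3} to know $F_{opt}\setminus L$ is an \SFVS there, and then do the local-ratio bookkeeping with $h$ the heaviest vertex of $L$. The only place your sketch is slightly imprecise is the phrase ``a counting bound on $|F\cap(S\setminus L)|$ versus $|F_{opt}\cap(S\setminus L)|$'': the paper does not compare these two directly but instead uses two separate bounds, namely $|(F_{opt}\setminus L)\cap S|\ge \nicefrac{2s}{3}-\nicefrac{s}{6}=\nicefrac{s}{2}$ (from the hypothesis) and $|F\cap S|+|L|\le s$ (since $F\cap S\subseteq S\setminus L$ and $L$ are disjoint in $S$), which together absorb the $w(h)$-terms and give $w(F_0)\le 2\,w(F_{opt})$.
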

From now on, we assume that $|S \cap F_{opt}| < \nicefrac{2s}{3}$.
Consider the ordering $\sigma(T[S \setminus F_{opt}]) = \langle v_1, \dots, v_{s-|S \cap F_{opt}|} \rangle$ of vertices in $S \setminus F_{opt}$.
For each $i \in [28 ]$, we say that the randomly chosen vertex $x_i \in S$ is \textit{good} if $x_i \not \in F_{opt}$ and the position of $x_i$ in $\sigma(T[S \setminus F_{opt}])$ is at most $\nicefrac{(s-|S \cap F_{opt}|)}{3}$. Let $E_i^1$ denote the event that $x_i$ is good. Thus, for each $i$, $E_i^1$ occurs with probability at least $\left(\frac{(s-|S \cap F_{opt}|)}{3}\right)/s \ge \frac13-\frac{2}{9}$. Analogous to \Cref{clm:degboundgoodver}, we have the following.

\begin{newclaim}[$\dagger$]
\label[newclaim]{clm:degboundgoodversfvst}
If $x_i$ is good, then $\min(d^+_{T[S]}(x_i),d^-_{T[S]}(x_i))\ge\frac{s}{18}+\frac1{4}-\frac12$. 
\end{newclaim}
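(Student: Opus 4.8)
The plan is to mirror the argument behind \Cref{clm:degboundgoodver}, but carried out inside the tournament $T[S]$ rather than the whole graph $G$. First I would unpack what ``$x_i$ is good'' gives us: $x_i \notin F_{opt}$ and $x_i$ sits among the first $\nicefrac{(s - |S\cap F_{opt}|)}{3}$ vertices of the \textsf{HL-degree ordering} $\sigma(T[S\setminus F_{opt}])$. Write $m = s - |S\cap F_{opt}| = |S\setminus F_{opt}|$ for the number of vertices in this ordering, and suppose $x_i = v_p$ with $p \le m/3$. By the defining property of the \textsf{HL-degree ordering} recalled in the methodology section (applied to the tournament $T[S\setminus F_{opt}]$, which has independence number $\alpha = 1$), the vertex $v_p$ has both in-degree and out-degree at least $\nicefrac{(m - (p-1) - 2)}{4}$ inside $T[S\setminus F_{opt}] - \{v_1,\dots,v_{p-1}\}$; here I substitute $\alpha = 1$ into the general bound $\nicefrac{(n-(i-1)-2\alpha)}{4\alpha}$.

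Next I would turn this into a bound on the degrees of $x_i$ inside $T[S]$. Removing the $p-1$ earlier vertices only decreased each degree by at most $p-1$, so in $T[S\setminus F_{opt}]$ itself both $d^+_{T[S\setminus F_{opt}]}(x_i)$ and $d^-_{T[S\setminus F_{opt}]}(x_i)$ are at least $\nicefrac{(m-(p-1)-2)}{4} \ge \nicefrac{(m - m/3 - 2)}{4} = \nicefrac{2m/3 - 2}{4} = \nicefrac{m}{6} - \nicefrac12$, using $p - 1 < p \le m/3$ (I should be a little careful whether it is $p-1 \le m/3$ or $p \le m/3$, but the slack is absorbed in the final inequality). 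Finally, passing from $T[S\setminus F_{opt}]$ back to $T[S]$ can only increase degrees, so $\min(d^+_{T[S]}(x_i), d^-_{T[S]}(x_i)) \ge \nicefrac{m}{6} - \nicefrac12$. It remains to replace $m$ by $s$: since we are in the regime $|S\cap F_{opt}| < \nicefrac{2s}{3}$, we have $m = s - |S\cap F_{opt}| > s/3$, which would only give $\nicefrac{s}{18} - \nicefrac12$, so I instead keep the sharper estimate by not throwing away the ``$-2$'' prematurely: from $d \ge \nicefrac{(m - (p-1) - 2)}{4}$ with $p - 1 \le m/3$ I get $d \ge \nicefrac{(2m/3 - 2)}{4} = \nicefrac{m}{6} - \nicefrac12$, and then $m > s/3$ gives $d \ge \nicefrac{s}{18} - \nicefrac12$. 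To reach the claimed $\nicefrac{s}{18} + \nicefrac14 - \nicefrac12$ I need the extra $\nicefrac14$; this comes from keeping $\alpha = 1$ exact in $\nicefrac{(m-(p-1)-2\alpha)}{4\alpha}$ and tracking that the ordering's degree floor is really $\lceil \cdot \rceil$, or equivalently from the analogue of the computation in \Cref{clm:degboundgoodver} where the $\nicefrac{1}{4\alpha}$ term survives — concretely, $d \ge \nicefrac{(m-(p-1))}{4} - \nicefrac12 \ge \nicefrac{m}{6} + \nicefrac14 \cdot(\text{correction}) - \nicefrac12$; I would carry out this bookkeeping exactly as in the $\alpha$-bounded proof with $\alpha$ set to $1$.

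The main obstacle is not conceptual but arithmetic: making the constants line up so that the final bound is exactly $\nicefrac{s}{18} + \nicefrac14 - \nicefrac12$ rather than something slightly weaker. The key is to be disciplined about (i) which vertices are removed when reading off the \textsf{HL-degree ordering} guarantee, (ii) the direction of each monotonicity step ($T[S\setminus F_{opt}] \subseteq T[S]$ increases degrees; deleting the prefix $v_1,\dots,v_{p-1}$ decreases them by at most the count deleted, and these two effects partially cancel), and (iii) using $|S\cap F_{opt}| < \nicefrac{2s}{3}$ exactly once, at the end, to convert $m$ into $s$. Since this is structurally identical to \Cref{clm:degboundgoodver} with $\alpha=1$, I would simply cite that computation and note the specialization, deferring the routine verification to the full version as the paper does for the other claims in this section.
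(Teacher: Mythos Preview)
Your approach is exactly the paper's, and the only issue is the bookkeeping you yourself flag. The missing $\tfrac14$ is not a ceiling artifact: it is the ``$-1$'' you dropped when bounding how many earlier vertices are removed. Since $x_i$ sits at position $p\le m/3$ in $\sigma(T[S\setminus F_{opt}])$ (with $m=|S\setminus F_{opt}|$), the HL-ordering guarantee is invoked after removing only $p-1\le m/3-1$ earlier vertices, so with $\alpha=1$ one gets
\[
d^{+}_{T[S]}(x_i)\ \ge\ \frac{m-(m/3-1)-2}{4}\ =\ \frac{2m/3-1}{4}\ =\ \frac{m}{6}-\frac14,
\]
not $\tfrac{m}{6}-\tfrac12$. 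Now $|S\cap F_{opt}|<\tfrac{2s}{3}$ gives $m>s/3$, hence $\tfrac{m}{6}-\tfrac14>\tfrac{s}{18}-\tfrac14=\tfrac{s}{18}+\tfrac14-\tfrac12$, and the same holds for $d^{-}_{T[S]}(x_i)$. This is precisely the computation the paper carries out, and it is exactly where the $\tfrac{1}{4\alpha}$ term in \Cref{clm:degboundgoodver} comes from as well: it is the ``$+1$'' in the numerator (from counting $p-1$ rather than $p$ removed vertices) divided by $4\alpha$.
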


For a fixed $i \in [28]$, suppose that $x_i$ is good (i.e., we condition on the event $E_i^1$). 
Similar to \Cref{alg:findfvs}, the set $C_i$ and function $w_i$ are computed as follows. 
\begin{algo}
\footnotesize
    \begin{algorithmic}
      \State $C_i = \{\}, w_i=w$
      \While{there is a $\triangle$ in $T - C_i$ that contains $x_i$}\Comment{\textcolor{blue}{Eliminating $\triangle$s with $x_i$}}
        \State Let $C'$ be a $\triangle$ in $T - C_i$ containing $x_i$ 
        \State Let $v$ be a lightest vertex in $C' \setminus \{x_i\}$ acc. to $w_i$
        \State $C_i = C_i \cup \{v\}$
        \State $w_i = \textsf{update4}(w_i, C' \setminus \{x_i\})$  
      \EndWhile
    \end{algorithmic}
\end{algo}

Let $R_i^1 = R_{T - C_i}(x_i )$, $R_i^2 = T[\overline{R}_{T - C_i}(x_i)]$, $T_1=T[R_i^1]$, and $T_2= T[R_i^2]$. 
Observe that all cycles passing through a vertex in $S$ in $T - C_i$ are contained completely inside either $T_1$ or $T_2$; otherwise, $x_i$ would form a triangle with an arc from such a cycle (by \Cref{prop:shortcycleinsfvst}). Hence, we can deduce that $F_{opt} \setminus C_i= F_{opt} \cap (V(T_1) \cup V(T_2))$. Moreover, since $T_1 \cup T_2 = T-C_i$, we have that $F_{opt} \setminus C_i$ is an \SFVS in $(T- C_i, S\setminus\!C_i, w)$ (by \Cref{obs3}). 

\hide{\il{Not using this anywhere}
Thus, (due to Lines~29--32) $x_i$ cannot be part of any cycle in $T-C_i$. Thus, $F_{opt} \cap V(T_1)$ and $F_{opt} \cap V(T_2)$ are optimal solutions in $(T_1, S\cap V(T_1), w)$ and $(T_2, S\cap V(T_2), w)$, respectively (by \Cref{scor}).}

Let $F_i^+$ an $F_i^-$ denote the solutions returned by the recursive calls \textsf{FindSFVS}($T[R_i^1, S \cap R_i^1, w_i$) and \textsf{FindSFVS}($T[R_i^2], S \cap R_i^2, w_i$), resp. Given that $R_i^1=R_{T-C_i}(x_i)$ and $R_i^2=\overline{R}_{T-C_i}(x_i)$, we note that $F_i^+ \cup F_i^{-}$ is an \SFVS in $(T-C_i, S\setminus C_i, w_i)$ (by \Cref{lem:divide-conquer-SFVS}). Thus, $F_i=C_i\cup F_i^+ \cup F_i^{-}$ is an \SFVS in $(T, S, w_i)$.


Consequently, each set in $\{F_i\}_{i=0}^{28}$ is an \SFVS in $T$. Since each set in \\$\{F_Q\}_{Q \in \binom{S}{\le30}}$ is also an \SFVS, the algorithm always returns an \SFVS in $T$. Next, we will analyze the quality of the solution $F_i$.

\hide{
\il{++++++++READ ABOVE++++++++++++++++++}
Let $F_i^+$ an $F_i^-$ denote the SFVSs returned by the recursive calls \textsf{FindSFVS}($T[R_i^1], S \cap R_i^1, w_i$) and \textsf{FindSFVS}($T[R_i^2], S \cap R_i^2, w_i$), resp. 
\il{State weight functions..}
Observe that all cycles passing through a vertex in $S$ in $T - C_i$ are contained completely inside either $T_1$ or $T_2$, otherwise $x_i$ would form a triangle with an arc from such a cycle (by \Cref{prop:shortcycleinsfvst}). \Ma{Thus, $F_{opt} \cap V(T_1)$ and $F_{opt} \cap V(T_2)$ are optimal solutions in $T_1$ and $T_2$ resp (by \Cref{scor}). Moreover, $F_{opt} \cap (V(T_1) \cup V(T_2)) = F_{opt} \setminus C_i$ is an \SFVS in $T_1 \cup T_2 = T - C_i$ (by \Cref{obs3}). Thus, $C_i \cup F^+_i \cup F^-_i$ is an \SFVS in $T$ (by \Cref{lem:divide-conquer-SFVS}).} Consequently, each set in $\{F_i\}_{i=0}^{28}$ is an \SFVS in $T$. Since each set in $\{F_Q\}_{Q \in \binom{S}{\le30}}$ is also an \SFVS, the algorithm always returns an \SFVS in $T$. 
\ma{In what follows, we will analyse...}
Next, we will analyze the quality of the solution $F_i$.
\il{+++++++++++++++++++++++++++++++++++}
}

Let $E_i^2$ and $E_i^3$ denote the events that $F_i^+$ and $F_i^-$ are $2$-approximate SFVSs in $(T_1, S \cap R_i^1, w_i)$ and $(T_2, S \cap R_i^1, w_i)$, resp. From now on, we condition on the events $E_i^1$, $E_i^2$, and $E_i^3$.

By applying the induction hypothesis on $T_1$ and $T_2$, we have that each of $E_i^2$ and $E_i^3$ happens individually with probability at least $\nicefrac12$. As $F_i^+$ and $F_i^-$ are $2$-approximate \SFVS in $(T_1, S \cap R_i^1, w_i)$ and $(T_2, S \cap R_i^2, w_i)$, resp., $F_i^+ \cup F_i^-$ is a 2-approximate \SFVS in $(T-C_i, S\setminus C_i, w_i)$. Since, $F_{opt} \setminus C_i$ is also an \SFVS in $(T - C_i, S\setminus C_i, w_i)$, we can infer that $w_i(F^+_i \cup F^-_i) \le 2 \cdot w_i(F_{opt} \setminus C_i)$. The following result is analogous to \Cref{clm:smalloptapproxfactor}.

\begin{newclaim}[$\dagger$]
\label[newclaim]{clm:smalloptsfvst}
The set $F_i = C_i \cup F_i^+ \cup F_i^-$ is a $2$-approximate \SFVS in $(T,S,w)$.
\end{newclaim}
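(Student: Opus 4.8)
The plan is to mirror the proof of Claim~\ref{clm:smalloptapproxfactor}, replacing \textsf{update2} by \textsf{update4} and tracking the fact that weight updates affect only vertices in $Q = C' \setminus \{x_i\}$, all of which (apart from $x_i$) lie in $S$. Since $F_i = C_i \cup F_i^+ \cup F_i^-$ has already been shown to be an \SFVS in $(T,S,w)$, it suffices to establish the weight bound $w(C_i \cup F_i^+ \cup F_i^-) \le 2 \cdot w(F_{opt})$. Suppose $C_i = \{v_1, \dots, v_\ell\}$ at the end of the while loop, with $v_j$ the $j^{th}$ vertex added, and let $w_i^0 = w$, with $w_i^j$ denoting the weight function after $j$ applications of \textsf{update4}, so $w_i^\ell = w_i$. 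As in the earlier proof I would prove the stronger statement that for each $j \in [\ell+1]$,
\[
w_i^{j-1}\bigl(F_i^+ \cup F_i^- \cup \{v_j, \dots, v_\ell\} \cup \{v_1, \dots, v_{j-1}\}\bigr) \le 2 \cdot w_i^{j-1}(F_{opt}),
\]
and then instantiate it at $j = 1$, where $w_i^0 = w$ and the left side becomes $w(F_i^+ \cup F_i^- \cup C_i)$.

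First I would record the analogue of \cref{eqn:weight_function}: by the definition of \textsf{update4}, once $v_k$ has been selected as the lightest vertex of some $C' \setminus \{x_i\}$ and the weights of that set are decremented by $w(v_k)$, we get $w_i^{k}(v_k) = 0$, and subsequent updates only decrease weights, so $w_i^{j-1}(v_k) = 0$ for all $k \in [j-1]$. Hence the displayed inequality is equivalent to $w_i^{j-1}(F_i^+ \cup F_i^- \cup \{v_j,\dots,v_\ell\}) \le 2 \cdot w_i^{j-1}(F_{opt} \setminus \{v_1,\dots,v_{j-1}\})$. The induction on $j$ runs in decreasing order. For the base case $j = \ell+1$ we have $w_i^\ell = w_i$, and $w_i(F_i^+ \cup F_i^-) \le 2 \cdot w_i(F_{opt}\setminus C_i)$ holds because $F_i^+ \cup F_i^-$ is a $2$-approximate \SFVS in $(T - C_i, S \setminus C_i, w_i)$ and $F_{opt}\setminus C_i$ is an \SFVS in that instance (both facts established just before the claim). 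For the inductive step, passing from $j+1$ to $j$: let $C'$ be the triangle from which $v_j$ was picked, and let $u = w_i^{j-1}(v_j)$ be the weight subtracted from every vertex of $C' \setminus \{x_i\}$. Since $v_j \in F_i$ and $v_j$ was the lightest vertex of $C' \setminus \{x_i\}$ under $w_i^{j-1}$, and $F_{opt}$ must contain at least one vertex of $C' \setminus \{x_i\}$ (as $x_i \notin F_{opt}$, $F_{opt}$ is an \SFVS, and $C'$ is a triangle through $x_i \in S$), the decrease on the right-hand side is at least $2u$, while on the left the decrease is exactly $2u$ (the term $v_j$ plus whichever of $F_i^+ \cup F_i^- \cup \{v_{j+1},\dots,v_\ell\}$ happen to lie in $C'$ — one needs that these contribute at most $u$ of decrease, which follows since at most... ) — more carefully, one argues as in Claim~\ref{clm:smalloptapproxfactor} that the net change from $w_i^{j}$ to $w_i^{j-1}$ on both sides preserves the inequality, using $w_i^{j-1}(v_j) = u$ and $w_i^{j}(v_j) = 0$ together with the fact that $|F_{opt} \cap (C'\setminus\{x_i\})| \ge 1$.

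The main obstacle, as in the \DFVSbIN case, is the bookkeeping in the inductive step: one must verify that the weight decrement applied by \textsf{update4} to the set $C' \setminus \{x_i\}$ reduces the left-hand side by exactly $2u$ (counting $v_j$ with coefficient contributing to the ``$2$'' and any other members of $C'\setminus\{x_i\}$ that lie in $F_i^+ \cup F_i^-$ or among $\{v_{j+1},\dots,v_\ell\}$ appropriately) while reducing the right-hand side by at least $2u$, so that the inequality at index $j$ follows from that at index $j+1$. This is exactly the computation deferred to the full version for Claim~\ref{clm:smalloptapproxfactor}, and the only genuinely new point here is that all affected vertices other than $x_i$ belong to $S$, so the terminal set bookkeeping ($S \setminus C_i$ in the subinstances) is consistent; the \SFVS-specific facts needed — that a shortest cycle through $x_i \in S$ is a triangle, and the divide-and-conquer behavior of \SFVS — are supplied by Lemma~\ref{prop:shortcycleinsfvst} and Lemma~\ref{lem:divide-conquer-SFVS}. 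I would therefore conclude by stating that the inductive argument is identical in structure to that of Claim~\ref{clm:smalloptapproxfactor} and defer the detailed step to the full version, which concludes the proof.
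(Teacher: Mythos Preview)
Your proposal is essentially the paper's own proof: the same reverse induction on $j$, the same observation $w_i^{j-1}(v_k)=0$ for $k<j$, the same base case using the $2$-approximation on $(T-C_i,S\setminus C_i,w_i)$, and the same inductive step driven by $|C'\setminus\{x_i\}|=2$ and $|F_{opt}\cap(C'\setminus\{x_i\})|\ge 1$. Two small inaccuracies are worth cleaning up. First, the vertices of $C'\setminus\{x_i\}$ need \emph{not} lie in $S$ (the triangle through $x_i$ may use non-terminal vertices), and nothing in the argument requires this; drop that remark. Second, in the inductive step the change on the left-hand side is \emph{at most} $2u$, not exactly $2u$: the set $F_i^+\cup F_i^-\cup\{v_j,\dots,v_\ell\}$ may contain only one of the two vertices of $C'\setminus\{x_i\}$, and ``at most'' is precisely the direction needed (combined with the right-hand side dropping by at least $2\cdot 1\cdot u$) to carry the inequality from $j+1$ to $j$, exactly as in the appendix computation for \Cref{clm:smalloptapproxfactor}.
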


We will conclude the proof of the theorem by showing that our algorithm succeeds with bounded probability in time $n^{\bigoh(1)}$. 

\noindent\textbf{Probability analysis.} 
With probability at least $\nicefrac12$ there exists $i \in [28]$ such that all the three events occur and $F^+_i \cup F^-_i \cup C_i$ is a $2$-approximate SFVS in $(T,S,w)$.

\noindent\textbf{Running time analysis.} 
If $|S\cap F_{opt}| \le 30$, then since there are $\bigoh(n^{30})$ subsets of $S$ of size at most 30 and for each subset, 
its extension to a solution can be computed in time $n^{\bigoh(1)}$,
the set $\{F_Q\}_{Q \in \binom{S}{\le30}}$ can be computed in time $n^{\bigoh(1)}$.
Else for each $i$, the set $C_i$ and function $w_i$ can be computed in time $\bigoh(n^4)$. Finding the $\nicefrac{s}{6}$ lightest vertices in $S$ can be done in $\bigoh(s \log s)$ time.

If $x_i$ is not good, then $F_i$ is set to $S$ and no further recursive calls are made.
By \Cref{clm:degboundgoodversfvst}, if $x_i$ is good, both $d^+_{T[S]}(x_i)$ and $d^-_{T[S]}(x_i)$ are at least $\nicefrac{s}{18}+\nicefrac1{4}-\nicefrac12 $. Since $T - C_i$ does not contain any cycle that $x_i$ is part of, the number of vertices in $S \cap R_i^1$ and $S \cap R_i^2$ is at most $s-(\nicefrac{s}{18}+\nicefrac1{4}-\nicefrac12) \le s(1-\nicefrac{1}{18}) +\nicefrac12 \le s(1-\nicefrac{1}{30})$ (since we have assumed that $s>30$).
The total number of recursive subproblems is given by an application of the Master theorem \cite{cormen2022introduction} to the recurrence relation
$T(s) \le 2 \cdot 28 T(s(1-\nicefrac{1}{30})) + T(30) +T(s(1-\nicefrac{1}{6}))= \bigoh(s^{122})$.


Thus, the overall running time is $T(s) \cdot (\textnormal{time spent at each subproblem}) = T(s) \cdot n^{\bigoh(1)} = n^{\bigoh(1)}$ (since $s \in [n]$). 
This concludes the proof of the theorem.

\section*{Acknowledgement}
We thank the anonymous reviewers for their helpful comments and suggestions.

\bibliography{main_latin}

\appendix

\section*{Appendix}

\section{Extended Preliminaries}
In this paper, we deal with simple directed graphs (\textit{digraphs}, in short) containing no parallel edges/arcs.
By $V(G)$ and $E(G)$, we denote the vertices and arcs of a digraph $G$ resp. We work in the setting of vertex weighted digraphs:  by $(G,w)$ we denote a vertex weighted digraph $G$ with weight function $w: V(G)\rightarrow \mathbb{N}_{\ge 0}$. The weight of a subset of vertices is the sum of weights of the vertices in the subset. Note that the setting of unweighted graphs is a special case of weighted graphs.
If there is an arc $(u,v) \in E(G)$, then $u$ is an \textit{in-neighbor} of $v$, and $v$ is an \textit{out-neighbor} of $u$.  
The \textit{in-neighborhood} of a vertex $x$, denoted by $N^{-}(x)=\{v| (v,x)\in E(G)\}$, is the set of in-neighbors of $x$. 
The \textit{in-degree} of a vertex $x$, denoted by $d^{-}_{G}(x)=|N^{-}(x)|$, is the number of in-neighbors of $x$.
The \textit{out-neighborhood} and \textit{out-degree} of a vertex $x$, denoted by $N^{+}(x)$ and $d^{+}_{G}(x)$ resp., are defined analogously. \textit{Deleting} a
vertex $v$ from $G$ involves removing the vertex $v$ from $V(G)$ and all those arcs in $E(G)$ that is incident to $v$. For a subset of vertices $S\subseteq V(G)$, we use $G - S$ to denote the digraph obtained by deleting all vertices of $S$ from $G$.
For a subset of vertices $S\subseteq V(G)$, the subgraph of $G$ \textit{induced} by $S$, denoted by $G[S]$, is the digraph on vertex set $S$ whose arcs are given by the arcs in $G$ with both end-vertices in $S$. For any induced subgraph $H$ of a vertex weighted graph $(G,w)$, we will assume that $w$ defines a weight function when restricted to $V(H)$.
An \textit{independent set} is a subset of vertices in $G$ that induces a digraph with no arcs. 
The \textit{independence number} of $G$, denoted by $\alpha(G)$, is the size of a largest independent set in $G$; we write $\alpha(G)$ as $\alpha$ when it is clear from context.


A \textit{directed path} of length $k$ is a sequence of distinct vertices $\langle x_1, \dots, x_k \rangle$ such that for every $i \in [k-1]$ we have $(x_{i}, x_{i+1}) \in E(G)$. We say a vertex $u$ is \textit{reachable} from a vertex $v$, if there is a directed path which contains both the vertices $u$ and $v$ and vertex $v$ appears before vertex $u$ in the sequence of the vertices which defines the directed path. 
By $R_{G}(x)$, we denote the set of all vertices other than $x$ reachable from $x$ in $G$. 
By $\overline{R_{G}}(x)$, we denote the set of all vertices not reachable from $x$ in $G$.
Observe that $(R_{G}(x), \overline{R_{G}}(x), \{x\})$ form a partition of $V(G)$.
The digraph $G$ is \textit{strongly connected} if for $\forall x,y\in V(G)$ where $x\neq y$ there is a directed path from $x$ to $y$.
A \textit{directed cycle} of length $k$ is a sequence of distinct vertices $\langle x_1, \dots, x_k \rangle$ such that for every $i \in [k-1]$ we have $(x_{i}, x_{i+1}) \in E(G)$, and $(x_k, x_1) \in E(G)$. 
A digraph is \textit{acyclic} if it does not contain a directed cycle. 

A \textit{feedback vertex set} (FVS) in $G$ is a subset of vertices $S\subseteq V(G)$ such that $G-S$ is acyclic. 
Given a family $X=\{S_1, \ldots , S_l\}$ where $S_i\subseteq V(G)$ for each $i\in [l]$, we call $S_a\in X$ \textit{lightest} if for all $i\in [l]$ we have $w(S_a)\leq w(S_i)$. Similarly, we call $S_b\in X$ \textit{heaviest} if for all $i\in [l]$ we have $w(S_i)\leq w(S_b)$. An FVS $F_{opt}$ in $G$ is an \textit{optimal} solution of the instance $(G,w)$ if for every other FVS $S$ in $G$ we have $w(S)\geq w(F_{opt})$ i.e. $F_{opt}$ is lightest among all FVSs in $G$. An optimal FVS is often reffered as a \textit{minimum} FVS. An FVS $F$ in $G$ is called a \textit{$2\alpha$-approximate solution} of the instance $(G,w)$ if $w(F)\leq 2\alpha\cdot w(F_{opt})$.

An algorithm is a \textit{factor-$f$ randomized approximation algorithm} for a problem $\mathcal{P}$ if, for each instance $\mathcal{I}$ of $\mathcal{P}$, with probability at least $\nicefrac12$, it returns a solution for $\mathcal{I}$ of size at most $f\times OPT_{\mathcal{I}}$ where $OPT_{\mathcal{I}}$ denotes the optimal solution for the instance $\mathcal{I}$.




    





We define an \textit{HL-degree ordering} $\sigma(G) = \langle v_1, \dots, v_n \rangle$ on the vertices of $G$ by the recursive application of \Cref{clm:hiinhioutver}. Thus, $v_1$ has both $d^{+}_G(v_1)$ and $d^{-}_G(v_1)$ are at least $\nicefrac{(n-2 \alpha)}{4 \alpha}$; $v_2$ is a vertex such that both $d^{+}_{G - \{v_1\}}(v_2)$ and $d^{-}_{G - \{v_1\}}(v_2)$ are at least $\nicefrac{(n-1-2 \alpha)}{4 \alpha}$. More generally, for any $i \in \{2,\dots, n\}$, there are $(n-(i-1))$ vertices in $G -  \{v_1,\dots ,v_{i-1}\}$. Therefore for any $i\in \{2,\dots, n\}$ we have $d^{+}_{G -  \{v_1,\dots ,v_{i-1}\}}(v_i)$ and $d^{-}_{G -  \{v_1,\dots ,v_{i-1}\}}(v_i)$ are at least $\nicefrac{(n- (i-1) - 2 \alpha)}{4 \alpha}$.





\section{Missing Proofs}
\label{missing_proofs}

\subsection{Proof of \Cref{clm:hiinhioutver}}
    Consider the partition of $V(G)$ into two parts based on the in-degree and out-degree of vertices: ($V_{1}, V_{2}$) where $V_{1} = \{v \in V(G): d^{+}_G(v) \ge d^{-}_G(v)\}$ and $V_{2} = V(G) \setminus V_{1}$. At least one of the sets must have $\nicefrac{n}{2}$ vertices. 
    Without loss of generality, we may assume that $|V_{1}| \ge \nicefrac{n}{2}$. Therefore using \cref{lem: highinout}, we can conclude $\exists \hat{v}\in V_1$ such that $d^{-}_{G[V_1]}(\hat{v}) \ge \frac{|V_1| -\alpha}{2\alpha}= \frac{n-2\alpha}{4\alpha}$ as $G[V_{1}]$ has independence number $\alpha$. Combining this with the definition of $V_{1}$, we have that $d^{+}_G(\hat{v}) \ge d^{-}_G(\hat{v}) \ge d^{-}_{G[V_1]}(\hat{v}) \ge \nicefrac{(n - 2 \alpha)}{4 \alpha}$.
    


    


\subsection{Proof of \Cref{lem:divide-conquer-FVS}}
Forward direction is trivial as acyclicity is a hereditary property. 
For the backward direction, assume for the sake of contradiction that $F$ is not an \FVS in $G$: there is a cycle $C$ in $G - F$. Since $C$ is not a cycle in either $G[R_{G}(x)]$ or $G[\overline{R}_G(x)]$, there exists $x_1,x_2\in C$ such that $x_1 \in R_{G}(x)$ and $x_2 \in \overline{R}_G(x)$. But this is not possible because there is a path from $x_1$ to $x_2$ using arcs in $C$, a contradiction.

\subsection{Details in proof of \Cref{clm:largeopt}}

\begin{align*}
    w(F_0) &= w(F' \cup L)   =      w(F') + w(L)
                            =      w'(F') + |F'| w(v) +w(L)\\
                            &\le      2 \alpha\cdot  w'(F_{opt}\setminus L) + |F'| w(v) + w(L)
                            \tag{as $F_{opt} \setminus L$ is also an FVS in $G - L$}\\
                            &\le 2 \alpha\cdot (w(F_{opt})-|F_{opt} \setminus L| w(v)) + |F'| w(v) + w(L) \\
                            &\leq 2\alpha\cdot (w(F_{opt})-(\nicefrac{2n}{3\alpha}-\nicefrac{n}{6\alpha}) w(v))+ |F'| w(v) + w(L) \tag{since $|F_{opt}| \ge \nicefrac{2n}{3\alpha}$, $|L|=\nicefrac{n}{6\alpha}$}\\
                            &= 2\alpha\cdot (w(F_{opt})-(\nicefrac{n}{2\alpha}) w(v))+ |F'| w(v) + w(L)\\
                            &\le    2\alpha\cdot w(F_{opt}) - n\cdot w(v) + |F'| w(v) + |L| w(v)
                            \tag{since $v$ is the heaviest vertex in $L$}\\
                            &\le    2 \alpha\cdot w(F_{opt}) \tag{since $F'$ and $L$ are disjoint subsets of $V(G)$, i.e. $|F^\prime|+|L|\leq n$}.
\end{align*} 

Therefore, $F_0$ is a $2\alpha$-approximate FVS in $(G,w)$. 


\subsection{Proof of \Cref{clm:degboundgoodver}}
Let $T=G-F_{opt}$. 
Since $x_i$ is good, we have $x_i\in \{v_1,\dots,v_{|T|/3}\}$. 
Recall the degree bounds obtained by using \Cref{lem: highinout} in \Cref{subsec:our_methodology}. Applying the same on the subgraph $T$ (which also has independence number at most $\alpha$), we have
\begin{align*}
d^+_G(x_i) \ge d^{+}_{G - (F_{opt} \cup (v_1, \dots, v_{(|T|/3)-1})}(v_{|T|/3})
&\ge \frac{(n-(|F_{opt}|+\frac{|T|}{3} - 1)-2 \alpha)}{4\alpha} \tag{by \Cref{lem: highinout}}\\
=\frac{(n-(|F_{opt}|+\frac{n-|F_{opt}|}{3} - 1)-2 \alpha)}{4\alpha}
& \ge \frac{n}{6\alpha}-\frac{n}{9\alpha^2} +\frac1{4\alpha}-\frac12 \tag{since $|F_{opt}|< \nicefrac{2n}{3\alpha}$}\\
&\ge \frac{n}{18\alpha}+\frac1{4\alpha}-\frac12 \tag{since $\alpha \ge 1$}.\end{align*}
Similarly, we have $d^-_G(x_i) \ge \frac{n}{18\alpha}+\frac1{4\alpha}-\frac12$. 

\subsection{Details in proof of \Cref{clm:smalloptapproxfactor}}

\noindent\textbf{Inductive case:} For the inductive step, note that $v_j$ is a lightest vertex in $C'$ where $C' \subseteq C$ is a shortest induced cycle inside $C$. Clearly, $F_{opt}$ contains at least one vertex from $C'$, 
hence
$|F_{opt} \cap C'|\geq 1$. Moreover, if $x_i \notin C'$, by \Cref{obs:shortinducedcycles} we know that $|C'| \leq 2\alpha$, and so $1\le|F_{opt} \cap C'|\le 2\alpha$ follows. Else, we have $x_i\in C'$. But since $x_i \notin F_{opt}$ (it is good); 
hence,
$1 \le |F_{opt} \cap C'| = |F_{opt} \cap (C' \setminus \{x_i\})| \le 2\alpha$. Recall that $w_i^j = \textsf{update2}(w_i^{j-1}, C'\setminus\{x\})$. Thus, we have
\begin{align*}
&w_i^{j-1}(F^+_i \cup F^-_i \cup\{v_j, \dots, v_{\ell}\}\cup \{v_1, \dots, v_{j-1}\}) = w_i^{j-1}(F^+_i \cup F^-_i \cup \\
&\qquad \{v_j\} \!\cup \!\{v_{j+1}, \dots, v_{\ell}\})\tag{by \cref{eqn:weight_function}}\\
&  \le w_i^j(F^+_i \cup F^-_i \cup \{v_{j+1}, \dots, v_{\ell}\})+2\alpha \cdot w_i^{j-1}(v_j) \tag{since $|C' \setminus \{x_i\}| \le 2\alpha$}\\
& \le 2\alpha \cdot \left(w_i^j(F_{opt} \!\setminus \{v_{1}, \dots, v_j\})+ w_i^{j-1}(v_j)\right) \tag{by the induction hypothesis}\\
&  = 2\alpha \cdot\left(w_i^j(F_{opt} \!\setminus \{v_{1}, \dots, v_{j-1}\})+ w_i^{j-1}(v_j)\right) \tag{by \cref{eqn:weight_function}, $w_i^{j}(v_j)=0$} \\
& \le 2 \alpha \cdot \Big(w_i^{j-1}(F_{opt}\setminus \{v_{1}, \dots, v_{j-1}\})-\lvert(F_{opt}\setminus \{v_{1}, \dots, v_{j-1}\})\cap (C'\setminus \{x_i\})\rvert\\
& \qquad \cdot w_i^{j-1}(v_j)+w_i^{j-1}(v_j)\Big) 
\\
&\le 2\alpha \cdot w_i^{j-1}\left(F_{opt} \setminus \{v_1,\dots,v_{j-1}\}\right)\tag{since $|(F_{opt}\setminus \!\{v_{1}, \dots, v_{j-1}\}) \cap (C'\setminus \{x_i\})| \ge 1$}\\
&= 2\alpha \cdot w_i^{j-1}(F_{opt}).
\end{align*}


\subsection{Solution of \Cref{eq:fvsbin_rr}}
\begin{align*}
T(n) &\le 2 \cdot 28 \alpha T\left(n\left(1-\frac{1}{30\alpha}\right)\right) + 28\alpha\bigoh(n^4)+T\left(n\left(1-\frac{1}{6\alpha}\right)\right)\\
&\le 57\alpha T\left(n\left(1-\frac{1}{30\alpha}\right)\right) + \bigoh(n^5)\tag{since $n > 30\alpha$}\\
&= \bigoh(n^{\log_{t}{57\alpha}}) \tag{for $t= \frac{30\alpha}{30\alpha-1}$, $\log_{t}{57} >5$ for all $\alpha \ge 1$}\\
&= \bigoh(n^{122\alpha}).
\end{align*}

\subsection{Proof of \Cref{prop:shortcycleinsfvst}}

    Assume for contradiction that $C=\langle x,a_1,\ldots,a_r \rangle$ is a shortest cycle containing $x$, where $r\geq 3$. Clearly if $(a_2,x) \in E(T)$, then $C^\prime=\langle x,a_1,a_2 \rangle$ is a directed cycle of length 3 that contains $x$, a contradiction. As $T$ is a tournament, we must have $(x,a_2) \in E(T)$. Then, $C^{\prime\prime} = \langle x,a_2,\ldots a_r\rangle \subset C$ is a cycle containing $x$ which is shorter than $C$, a contradiction.

    
\subsection{Proof of \Cref{lem:divide-conquer-SFVS}}
Forward direction is trivial as acyclicity is a hereditary property. 
For the backward direction, assume for the sake of contradiction that $F$ is not an \SFVS in $(T,S,w)$. Let $C$ be a triangle containing a vertex of $S\setminus F$ in $T-F$. Since $C$ is not a cycle in either $T[R_{T}(x)]$ containing vertices of $S\cap R_{T}(x)$ or $T[\overline{R}_T(x)]$ containing vertices of $S\cap \overline{R}_T(x)$, there exists $x_1,x_2\in C$ such that $x_1 \in S\cap R_{G}(x)$ and $x_2 \in S\cap \overline{R}_G(x)$. But this is not possible because there is an arc from $x_1$ to $x_2$ (also in $C$), a contradiction.


\subsection{Proof of \Cref{clm:basecasesfvst}}

    Consider the execution of the for loop (lines 1-15) during which $Q = F_{opt} \cap S$. We claim that $w(F_Q) \le 2\cdot w(F_{opt})$.

    There does not exist $a,b,c \in O$ such that $abc$ form a $\triangle$ in $T$ (otherwise $F_{opt} \cap \{a,b,c\} \ne \emptyset$). Thus, the execution does not enter line 4. From the construction of $F_Q$, we have the following.\begin{observation}
    \label{obs:cardtwointersection}
    For each $a,b \in O, c\in V(T) \setminus (I \cup O)$ such that $abc$ form a $\triangle$ in $T_I$, both $F_{opt} \cap \{a,b,c\} = \{c\}$ and $F_Q \cap \{a,b,c\} = \{c\}$.
    \end{observation}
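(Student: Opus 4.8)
The plan is to prove the two claimed equalities separately, both of which hinge on a single observation about the iteration under consideration. Recall that we have fixed the iteration of the for loop (lines 1--15) in which the guess is correct, i.e. $Q = F_{opt} \cap S$; write $I = Q$ for the guessed ``inside'' part and $O = S \setminus Q$ for the guessed ``outside'' part. Then $I = F_{opt} \cap S$ and $O = S \setminus F_{opt}$; in particular, $a,b \in O$ means $a,b \in S \setminus F_{opt}$, while $c \in V(T) \setminus (I \cup O) = V(T) \setminus S$.

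First I would establish $F_{opt} \cap \{a,b,c\} = \{c\}$. Since $T_I = T - I$ is an induced subgraph of $T$ and none of $a,b,c$ lies in $I$ (as $a,b \in O$ and $c \notin S \supseteq I$), the triple $abc$ is a triangle of $T$ as well, and it contains the vertex $a \in S$. Now $a,b \in S \setminus F_{opt}$, so $a,b \notin F_{opt}$; if in addition $c \notin F_{opt}$, then $abc$ would be a triangle in $T - F_{opt}$ passing through the $S$-vertex $a$, contradicting that $F_{opt}$ is an SFVS (equivalently, by \Cref{prop:shortcycleinsfvst}, contradicting that $F_{opt}$ hits every cycle through $S$). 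Hence $c \in F_{opt}$, and combined with $a,b \notin F_{opt}$ this yields $F_{opt} \cap \{a,b,c\} = \{c\}$.

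Next I would establish $F_Q \cap \{a,b,c\} = \{c\}$ by tracking how $F_Q$ is assembled. Because no triangle of $T$ has all three vertices in $O$ (such a triangle would force $F_{opt} \cap O \ne \emptyset$, impossible), the execution does not take the ``cannot extend'' branch (line 4); instead $F_Q$ is initialized to $I$ and then augmented in the two phases. The first phase adds, for every triangle of $T_I$ with exactly two vertices in $O$, its unique third vertex, which necessarily lies in $V(T) \setminus S$; the second phase adds only a vertex cover on edges whose endpoints all avoid $O$. Thus every vertex ever put into $F_Q$ lies either in $I$ or in $V(T) \setminus O$, so neither $a$ nor $b$ (which lie in $O$, hence outside $I$) is ever added: $a,b \notin F_Q$. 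On the other hand $abc$ is exactly a triangle of $T_I$ with its two $O$-vertices being $a,b$ and third vertex $c$, so the first phase adds $c$ to $F_Q$. Therefore $F_Q \cap \{a,b,c\} = \{c\}$.

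This is essentially a bookkeeping argument, and I do not expect a genuine obstacle. The only point requiring a line of care is promoting the given triangle $abc$ of $T_I$ to a triangle of $T$ that passes through $S$, so that the SFVS property of $F_{opt}$ can be applied; but that is immediate from $I \cap \{a,b,c\} = \emptyset$ together with $a \in S$.
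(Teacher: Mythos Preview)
Your argument is correct and is precisely the natural unpacking of what the paper leaves implicit: the paper states this observation without proof, merely prefacing it with ``From the construction of $F_Q$, we have the following,'' so your two-part verification (using $O = S \setminus F_{opt}$ to force $c\in F_{opt}$, and tracking the three stages of $F_Q$'s construction to see that $c$ is added while $a,b\in O$ never are) is exactly the intended justification. The only cosmetic point is that in the second phase you are implicitly using that all endpoints of $E'$ lie outside $O$ (since after the while loop no surviving triangle in $T_I$ has two vertices in $O$), hence any sensible vertex-cover subroutine returns a set disjoint from $O$; this is consistent with how the paper uses $V'$ later.
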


    Any triangle $abc$ with $\{a,b,c\} \cap S \ne \emptyset$ that we have not considered till now contains exactly one vertex from $O$. We may assume without loss of generality that $a \in O$. Then, $F_{opt}$ contains at least one of $\{b,c\}$ (otherwise $T - F_{opt}$ would contain the triangle $abc$). For each such triangle $abc$, we have the arc $(b,c)$ in the set $E'$. Let $\hat{C}$ denote the $2$-approximate vertex cover returned by \textsf{FindVertexCover}$((V(T),E'),w)$. Let $V'$ denote the set of endpoints of arcs in $E'$. Since $F_{opt}$ is an SFVS in $T$, it is a vertex cover in $((V(T),E'),w)$. Therefore, $w(\hat{C}) = w(F_Q \cap V') \le 2\cdot w(F_{opt} \cap V')$. 

    From the construction of $F_Q$, we have $w(F_Q \cap I) = w(F_{opt} \cap I)$. From \Cref{obs:cardtwointersection}, we have $w(F_Q \setminus (I \cup V')) = w(F_{opt} \setminus (I \cup V'))$. Combining all, we have
    \begin{align*}
        w(F_Q) &= w((F_Q \cap I) \cup (F_Q \cap V') \cup (F_Q \setminus (I \cup V')))\\
        &= w(F_Q \cap I) + w(F_Q \cap V') + w(F_Q \setminus (I \cup V'))\\
        &\le w(F_{opt} \cap I) + 2\cdot w(F_{opt} \cap V') + w(F_{opt} \setminus (I \cup V'))
        \le 2\cdot w(F_{opt}).
    \end{align*}
    This concludes the proof of the claim.


\subsection{Proof of \Cref{clm:largeoptsfvst}}
Let $w'$ denote the weight function returned by $\textsf{update3}(w,L)$ and $F'$ denote the set returned by the recursive call $\textsf{FindSFVS}(T - L, S\setminus L, w')$. Let $v$ be the heaviest vertex in $L$. By applying the induction hypothesis on $G -L$, we have that with probability at least $\nicefrac12$, $F'$ is a $2$-approximate \SFVS in $(G -L,w')$. Now we have the following, 
\begin{align*}
    w(F_0) &= w(F' \cup L)   =      w(F') + w(L) 
                            =      w'(F') + |F'\cap S| w(v) +w(L)\\
                            &\le      2\cdot  w'(F_{opt}\setminus L) + |F'\cap S| w(v) + w(L)
                            \tag{as $F_{opt} \setminus L$ is also an \SFVS in $G - L$}\\
                            &\le 2\cdot (w(F_{opt})-|(F_{opt} \setminus L) \cap S| w(v)) + |F'| w(v) + w(L) \\
                            &\leq 2\cdot (w(F_{opt})-(\nicefrac{2s}{3}-\nicefrac{s}{6}) w(v))+ |F'| w(v) + w(L) \tag{since $|F_{opt} \cap S| \ge \nicefrac{2s}{3}$, $|L|=\nicefrac{s}{6}$}\\
                            &\le 2\cdot (w(F_{opt})-(\nicefrac{s}{2}) w(v))+ |F'\cap S| w(v) + w(L)\\
                            &\le    2\cdot w(F_{opt}) - s\cdot w(v) + |F'\cap S| w(v) + |L| w(v)
                            \tag{since $v$ is the heaviest vertex in $L$}\\
                            &\le    2\cdot w(F_{opt}) \tag{since $F'\cap S$ and $L$ are disjoint subsets of $S$, i.e. $|F^\prime \cap S|+|L|\leq s$}.
\end{align*}
Therefore, $F_0$ is a $2$-approximate \SFVS in $(G,w)$.




\subsection{Proof of \Cref{clm:degboundgoodversfvst}}

Since $x_i$ is good, we have $x_i\in \{v_1,\dots,v_{|S \setminus F_{opt}|/3}\}$. 
Recall the degree bounds obtained by using \Cref{lem: highinout} in \Cref{subsec:our_methodology}. Applying the same on the tournament $T[S] - (S \cap F_{opt})$, we have

\begin{align*}
d^+_{T[S]}(x_i)) &\ge d^{+}_{T[S] - ((S \cap F_{opt}) \cup (v_1, \dots, v_{(s-|S \cap F_{opt}|)/3 -1}))}(v_{(s-|S \cap F_{opt}|)/3})\\
&\ge \frac{(s-(|S \cap F_{opt}|+\frac{s - |S \cap F_{opt}|}{3} - 1)-2 )}{4} \tag{by \Cref{lem: highinout}}\\
&\ge \frac{s}{6}-\frac{s}{9} +\frac1{4}-\frac12 \tag{since $|S \cap F_{opt}|< \nicefrac{2s}{3}$}\\
&\ge \frac{s}{18}+\frac1{4}-\frac12.\end{align*}
Similarly, we have $d^-_{T[S]}(x_i) \ge \frac{s}{18}+\frac1{4}-\frac12$. 

\subsection{Proof of \Cref{clm:smalloptsfvst}}

Since 
$F_i$ is an SFVS in $T$, it suffices to show that $w(C_i \cup F^+_i \cup F^-_i) \le 2\cdot w(F_{opt})$. 

Suppose that $C_i$ contains $\ell$ vertices at the end of the while loop. Then, $w_i$, which was initially $w$, was updated $\ell$ times using the method $\textsf{update4}$. 
Let $w_i^0=w$. For each $j \in [\ell]$, let $w_i^j$ denote the function $w_i$ after $j$ updates and let $v_j$ denote the $j^{th}$ vertex added to $C_i$.

We will prove a more general form of the claim that for each $j \in [\ell]$  we have 
\begin{equation}
w_i^{j-1}(F^+_i \cup F^-_i \cup\{v_j, \dots, v_{\ell}\} \cup \{v_1, \dots, v_{j-1}\})
\le 2\cdot w_i^{j-1}(F_{opt}).
\label{iterative-weight-update_s}
\end{equation}
Note that for a fixed $i$ and $j=1$, we have $w_i^0=w$ and $C_i=\{v_1, \ldots, v_{\ell}\}$ and so the above condition yields $w(F_i^+ \cup F_i^- \cup C_i) \leq 2\cdot w(F_{opt})$. 
Observe that by the definition of \textsf{update4}, 
\begin{equation}
w_i^{j-1}(v_k)=0 \textnormal{ for all } k \in [j-1].
\label{eqn:weight_function_s}
\end{equation}
Then, \Cref{iterative-weight-update_s} is equivalent to the following: 
\[
w_i^{j-1}(F^+_i \cup F^-_i \cup\{v_j, \dots, v_{\ell}\}) \le 2\cdot w_i^{j-1}(F_{opt}\setminus\{v_1, \dots, v_{j-1}\}). \tag{by \cref{eqn:weight_function_s}}
\]

Our proof will use induction on $j$. 

\noindent\smallskip\textbf{Base Case:} For the base case $j=\ell+1$ (in which case $w_i^{j-1}=w_i^{\ell}=w_i$), we have 
\begin{align*}
w_i^{\ell}(F^+_i \cup F^-_i  \cup \{v_1, \dots, v_{\ell}\}) = w_i(F^{+}_{i} \cup F^-_i) &\le 2\cdot w^{\ell}_i(F_{opt} \setminus C_i) \tag{by \cref{eqn:weight_function_s}}.
\end{align*}

\smallskip
\noindent\textbf{Inductive case:} For the inductive step, note that $v_j$ is a lightest vertex in $C'\setminus\{x_i\}$ where $C'$ is triangle in $T-C_i$ containing $x_i$. Clearly, $F_{opt}$ contains at least one vertex from $C'$, 
hence
$|F_{opt} \cap C'|\geq 1$. 
Since $x_i$ is good we have $x_i \notin F_{opt}$ and 
hence
$1 \le |F_{opt} \cap C'| = |F_{opt} \cap (C' \setminus \{x_i\})| \le 2$. Recall that $w_i^j = \textsf{update4}(w_i^{j-1}, C'\setminus\{x\})$. Thus, we have
    \begin{align*}
&w_i^{j-1}(F^+_i \cup F^-_i \cup\{v_j, \dots, v_{\ell}\}\cup \{v_1, \dots, v_{j-1}\}) \\
&= w_i^{j-1}(F^+_i \cup F^-_i \cup \{v_j\} \cup \{v_{j+1}, \dots, v_{\ell}\})\tag{by \cref{eqn:weight_function_s}}\\
&  \le w_i^j(F^+_i \cup F^-_i \cup \{v_{j+1}, \dots, v_{\ell}\})+2 \cdot w_i^{j-1}(v_j) \tag{since $|C' \setminus \{x_i\}| = 2$}\\
& \le 2\cdot \left(w_i^j(F_{opt} \!\setminus \{v_{1}, \dots, v_j\})+ w_i^{j-1}(v_j)\right) \tag{by the induction hypothesis}\\
&  = 2\cdot\left(w_i^j(F_{opt} \!\setminus \{v_{1}, \dots, v_{j-1}\})+ w_i^{j-1}(v_j)\right) \tag{by \cref{eqn:weight_function_s}}\\
& \le 2 \cdot \Big(w_i^{j-1}(F_{opt}\setminus \{v_{1}, \dots, v_{j-1}\})-\lvert(F_{opt}\setminus \{v_{1}, \dots, v_{j-1}\})\cap\\
& \qquad (C'\setminus \{x_i\})\rvert ~w_i^{j-1}(v_j)+ w_i^{j-1}(v_j)\Big) 
\\
&\le 2\cdot w_i^{j-1}\left(F_{opt} \setminus \{v_1,\dots,v_{j-1}\}\right)\tag{since $|(F_{opt} \setminus \{v_1,\dots,v_{j-1}\}) \cap (C'\setminus \{x_i\})| \ge 1$}\\
&= 2 \cdot w_i^{j-1}(F_{opt}).
\end{align*}
This concludes the proof of the claim.

\section{Algorithm for Subset FVS in Tournaments}
\label{algosfvst}
\begin{algorithm}[H]
  \begin{algorithmic}[1]
    \caption{\textsf{FindSFVS}}\label{alg:findsfvs}
    \Require{A tournament $T=(V,E)$, a vertex subset $S \subseteq V(T)$, weights $w: V(T) \rightarrow \mathbb{N}_{\ge 0}$}
    \Ensure{A subset of vertices $X$}
    \ForAll{$Q \in \binom{S}{\le30}$}
    \State $(I,O) = (Q, S \setminus Q)$ \Comment{\textcolor{blue}{Subset of $S$ (I)nside and (O)utside $F_Q$}}
    \If{$\exists a,b,c \in O$ such that $abc$ form a $\triangle$ in $T$}
      \State $F_Q=S$\Comment{\textcolor{blue}{A trivial solution since $Q$ cannot be extended}} 
      \State \textbf{continue}
    \EndIf
    \State $F_{Q} := I$
    \State $T_I := T - I$
    \While{$\exists a,b \in O, c \in V(T) \setminus (I \cup O)$ such that $abc$ form a $\triangle$ in $T_I$}
      \State $F_Q = F_Q \cup \{c\}$
      \State $T_I = T_I - \{c\}$
    \EndWhile    
    \State $E' := \bigcup\limits_{\substack{a \in O\\abc \textnormal{ form a } \triangle \textnormal{ in }T_I}} (b,c)$
    \State $F_Q = F_Q \cup \mathsf{FindVertexCover}((V(T),E'),w)$
    \EndFor
    \algstore{myalg_2}
\end{algorithmic}
\end{algorithm}  
\begin{algorithm}[H]                     
\begin{algorithmic} [1]                   
\algrestore{myalg_2}
    \If{$s \le 30$}
      \State Return a lightest set among $\{F_Q\}_{Q \subseteq \binom{S}{\le30}}$ acc. to $w$, say $X$
    \EndIf
    \State Let $L$ be a set of $\frac{s}{6}$ lightest vertices in $S$ acc. to $w$
    \State $F_{0} := L \cup \textsf{FindSFVS}\left(T - L, S \setminus L, \textsf{update3}(w,L)\right)$
    \ForAll{$i \in [28]$}
      \State Pick $x_i$ uniformly at random from $S$
      \If{$\min\{d^+_{T[S]}(x_i), d^-_{T[S]}(x_i)\} < \frac{s}{18}+\frac1{4}-\frac12$}
        \State $F_i:= S$
        \State \textbf{continue}
      \EndIf
      \State $C_i = \{\}$
      \State $w_i = w$
\While{there is a $\triangle$ in $T - C_i$ contains $x_i$}\Comment{\textcolor{blue}{Eliminating $\triangle$'s that $x_i$ is part of}}
        \State Let $C'$ be a $\triangle$ in $T - C_i$ containing $x_i$ 
        \State Let $v$ be a lightest vertex in $C' \setminus \{x_i\}$ acc. to $w_i$
        \State $C_i = C_i \cup \{v\}$
        \State $w_i = \textsf{update4}(w_i, C' \setminus \{x_i\})$  
      \EndWhile
      \State Let $R_i^1 = R_{T - C_i}(x_i)$ and $R_i^2 = \overline{R}_{T - C_i}(x_i)$ \label{lst:line:blah2}
      \State $F_i$ := $C_i \cup$ \textsf{FindSFVS}($T[R_i^1], S \cap R_i^1, w_i$) $\cup$ \textsf{FindSFVS}($T[R_i^2], S \cap R_i^2, w_i$)
    \EndFor   
    \State Return a lightest set among $\{F_Q\}_{Q \in \binom{S}{\le30}}, F_0,F_1, \dots, F_{28}$ acc. to $w$, say $X$
  \end{algorithmic}
\end{algorithm}

\end{document}